\documentclass[pra, twocolumn, a4paper, aps, 10pt, nofootinbib]{revtex4-2}
\usepackage[parfill]{parskip}
\usepackage[margin=50pt]{geometry}
\usepackage[utf8]{inputenc}
\usepackage{physics}
\usepackage{tikz}
\usetikzlibrary{quantikz2}
\usepackage{enumitem}
\usepackage{caption}
\usepackage{subcaption}
\usepackage{amsmath}
\usepackage{amsthm}
\usepackage{amssymb} 
\newtheorem{definition}{Definition}[section]
\newtheorem{theorem}{Theorem}[section]
\newtheorem{conjecture}{Conjecture}[section]
\newtheorem{corollary}{Corollary}[section]
\newtheorem{lemma}{Lemma}[section]
\theoremstyle{definition}

\usepackage{comment}
\usepackage[ruled]{algorithm2e}
\SetAlgorithmName{Protocol}{protocol}{List of Protocols}
\SetAlgoCaptionSeparator{ ---}
\usepackage{graphicx}
\usepackage{color}
\definecolor{ulbblue}{HTML}{003087} 
\usepackage{hyperref}
\usepackage{dutchcal}
\usepackage{mathtools}
\usepackage{thmtools}
\usepackage{bbm}

\hypersetup{
  colorlinks = true, 
  urlcolor  = blue!50!black, 
  linkcolor = ulbblue!70, 
  citecolor = blue!30!red 
}

\newcommand\id{\mathbbm{1}}
\DeclareMathOperator*{\Motimes}{\text{\raisebox{0.25ex}{\scalebox{0.8}{$\bigotimes$}}}}

\def\ket#1{\mathinner{|{#1}\rangle}}
\def\braket#1#2{\mathinner{\langle{#1|#2}\rangle}}
\def\ketbra#1#2{\mathinner{|{#1}\rangle\langle{#2}|}}
\def\dketbra#1#2{\mathinner{|{#1}\rangle\!\rangle\!\langle\!\langle{#2}|}}
\def\dket#1{\mathinner{|{#1}\rangle\!\rangle}}

\begin{document}
\hypersetup{pdfborder=0 0 0}
\title{Characterizing unitaries via quasi-process functions}
\author{Nasra \surname{Daher Ahmed}}
\email{ndaherah@proton.me}
\author{Ravi Kunjwal}
\email{quaintum.research@gmail.com}
\affiliation{Aix Marseille University, CNRS, LIS, Marseille, France}

\begin{abstract}
\noindent
It is well-known that every $n$-qubit unitary can be decomposed into a
product of controlled single-qubit unitaries. We show that the control
conditions in such a decomposition naturally define a quasi-process function, an object that originally
came up in the study of indefinite causal order. Here quasi-process functions yield a
representation of every $n$-qubit unitary in a canonical form, namely,
as a quasi-process function unitary (qPFU). We use this representation to show how quantum nonlocality without entanglement (QNLWE) can be understood in terms of unitaries that encode the trade-off between causal order and locality noted in R.~Kunjwal and \"A.~Baumeler, \href{https://doi.org/10.1103/PhysRevLett.131.120201}{Phys.~Rev.~Lett.~131, 120201 (2023)}. Specifically, we obtain two sufficient (but not necessary) criteria---mutual exclusivity of control conditions (MECC) and Unambiguity---for a unitary to be a process function unitary (PFU), \textit{i.e.}, it admits a qPFU representation where the quasi-process function is a process function, an object that can model paradox-free classical causal loops. We also show the inequivalence of these criteria, neither implying the other. We then show that the circuit depth of MECC unitaries admits a characterization in terms of clique-partitioning problems on Hamming graphs and use this correspondence to provide bounds on the circuit depth of $n$-qubit MECC unitaries.
Finally, we show how MECC unitaries provide a recipe
for constructing a unitary purification of the associated process
function, thereby providing a large class of unitary processes of interest in quantum causality. The generality of this causality-inspired representation of unitaries makes it a versatile tool for analyzing the structure of quantum circuits in other domains.
\end{abstract}	

\maketitle
\tableofcontents

\section{Introduction}
Any quantum experiment can be understood as a composition of three basic procedures: preparations, transformations, and measurements. In the circuit model of quantum computation, the preparation is typically an initialization of $n$ qubits in the $\ket{0}^{\otimes n}$ state, the transformation is a unitary $U$ acting on the $n$ qubits, and the measurement is typically a computational basis measurement, $Z^{\otimes n}$. Our concern here is the structure of the unitary that sits at the heart of the circuit model. It is this unitary that allows the qubits to ``talk" to each other in pursuit of the intended goal of the computation. It is this ``talking" that we model here via quasi-process functions \cite{KO23}, noting how all $n$-qubit unitaries are \textit{quasi-process function unitaries} (qPFUs).

To motivate the qPFU representation, we need to first take a detour into quantum nonlocality without entanglement (QNLWE) \cite{BDF99} and classical instances of indefinite causal order \cite{BW16}, two notions that were previously shown to be intimately linked \cite{KB23,DSL25}. This detour will motivate how the qPFU representation naturally arises from a particular way of thinking about this connection, even as the  representation itself is defined independently of it. Indeed, the concrete use-case of this representation that we will focus on in this paper is the trade-off between causal order and locality represented by QNLWE \cite{KB23}.

QNLWE refers to any ensemble of multipartite mutually orthogonal product states which cannot be perfectly distinguished from one another when the parties are limited to local operations and classical communication (LOCC), implying a gap between the set of separable operations (SEP) and LOCC. One example of QNLWE found in the seminal work of Bennett \textit{et al.}~\cite{BDF99} is the SHIFT basis:
\begin{align}
\begin{split}
&\{\ket{000}, \ket{111}, \ket{+01}, \ket{-01}, \\  
&\ket{1+0},\ket{1-0},\ket{01+}, \ket{01-} \},      
\end{split} 
\end{align}

where $\ket{\pm} = \frac{1}{\sqrt{2}}(\ket{0} \pm \ket{1})$.

While it is possible to prepare these states locally, it is not possible to perfectly discriminate between them using local operations and classical communication. A closer examination reveals that this conflict can be traced to an implicit assumption in LOCC, namely, that of definite causal order between the parties. This is reflected in the fact that at least one party must start any  LOCC protocol, but any information gained by this party is always at the expense of perfect distinguishability \cite{GV01}. It is the implicit assumption of definite causal order in LOCC that was exploited in Ref.~\cite{KB23} to reveal how perfect discrimination of the states in the SHIFT basis is possible using only local operations if we allow the classical communication to be carried out in an indefinite causal order.

The classical communication  exploited in Ref.~\cite{KB23} is given by a deterministic classical process (also known as a \textit{process function} \cite{BT21}) named the Araújo-Feix/Baumeler-Wolf (AF/BW) process \cite{BW16}:
\begin{equation}\label{lugano}
i_1 = (o_2 \oplus 1)o_3, i_2 = (o_3 \oplus 1)o_1, i_3 = (o_1 \oplus 1)o_2,
\end{equation}
where party $k$ receives a binary input $i_k\in\{0,1\}$ from the process and then sends a binary output $o_k\in\{0,1\}$ to the process ($k\in\{1,2,3\}$) (see Figure \ref{fig:afbw}). Since the input of each party is a function of the outputs of the other two parties, every party is in the causal future of the other two, \textit{i.e.}, no global causal order can be defined between them. 

\begin{figure}[htb!]
\centering
\includegraphics[scale=0.6]{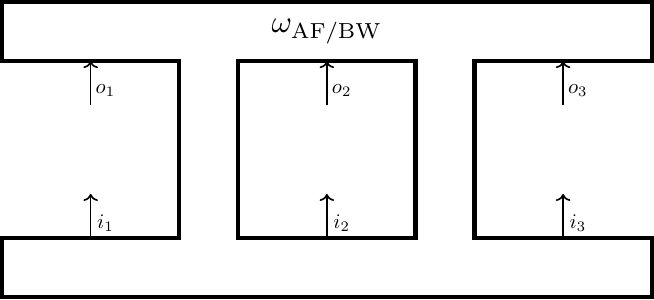}
\caption{AF/BW process function \cite{BW16}: Each party $k\in\{1,2,3\}$ receives an input $i_k\in\{0,1\}$ and sends an output $o_k\in\{0,1\}$. The process function wires the outputs to the inputs, allowing the parties to communicate according to Eq.~\eqref{lugano}.}
\label{fig:afbw}
\end{figure}

The trade-off between causal order and locality extends well beyond the case of the SHIFT ensemble and its multiqubit generalization was already proved in Ref.~\cite{KB23}. More recently, a characterization of QNLWE ensembles that admit perfect discriminability via local operations and process functions (LOPF) was obtained in Ref.~\cite{DSL25}. This characterization refers to a property of QNLWE ensembles termed  \textit{unambiguity}, \textit{i.e.}, the possibility of partitioning each party's local vectors in a QNLWE ensemble into disjoint subsets of mutually orthogonal vectors \cite{DSL25}.

While the connection between process functions and QNLWE is thus well-understood \cite{KB23, DSL25}, there is another aspect that merits careful scrutiny: the structure of unitaries that rotate the computational basis to the QNLWE bases of interest. The structure of these unitaries can encode process functions and we will refer to such unitaries as \textit{process function unitaries} (PFUs). These unitaries provide the obvious (causal) protocol for perfect discrimination of QNLWE bases that gives up on the locality of operations in LOCC: namely, by subjecting the (composite) quantum system to the adjoint of the unitary and subsequently measuring it in the computational basis. This is distinct from the (noncausal) protocol that perfectly discriminates between elements of an unambiguous QNLWE basis using only local operations and process functions (LOPF) \cite{KB23,DSL25}. We illustrate the two protocols for the case of the SHIFT basis in Figure \ref{fig:SHIFT protocols}.
\begin{figure}[htb!]
    \begin{subfigure}[t]{0.2\textwidth}
        \centering
        \includegraphics[scale=0.75]{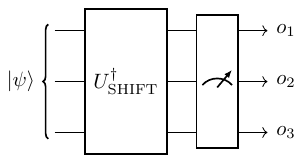}   
        \caption{Causal and global protocol.}
        \label{fig:SHIFT-PFU}
     \end{subfigure}%
     ~
     \begin{subfigure}[t]{0.3\textwidth}
        \centering
        \includegraphics[scale=0.55]{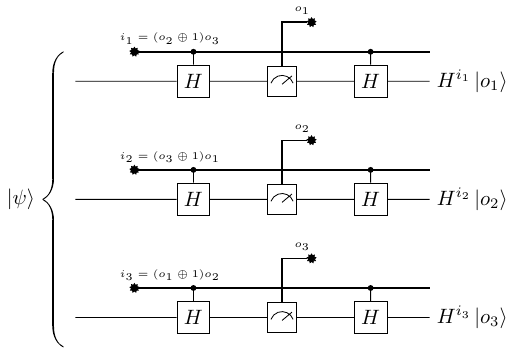}    
        \caption{Noncausal and local \newline protocol \cite{KB23}.}

        \label{fig:SHIFT-LUPF}
     \end{subfigure}
    \caption{Two protocols for the perfect discrimination of the states in the SHIFT basis. $U^{\dagger}_{\rm SHIFT}$ is the unitary that rotates the SHIFT basis to the computational basis.}
    \label{fig:SHIFT protocols}
\end{figure}

Key to our investigation is the conceptual and mathematical connection between these two kinds of protocols: PFUs encode the trade-off between causal order and locality that an LOPF protocol for perfect state discrimination would require. We are interested in finding the conditions under which a unitary is a PFU and, more generally, in the relationships between QNLWE, process functions, and unitaries (Figure \ref{fig:triangle}). Our starting point is the qPFU representation of unitaries.

\begin{figure}[htb!]
\centering
\includegraphics[scale = 0.8]{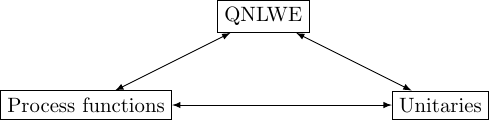}
\caption{\label{fig:triangle}}
\end{figure}

The rest of the paper is organized as follows. In Section \ref{sec:prelim}, we recall some preliminary notions that are needed in the rest of the paper. In Section \ref{sec:qpfu}, we define the qPFU representation and prove that it is general enough to represent all $n$-qubit unitaries. In Section \ref{sec:mecc}, we define unitaries that satisfy the criterion of mutual exclusivity of control conditions (MECC) and show that all MECC unitaries are process function unitaries. We also show that this criterion, while certifying a large class of previously known constructions \cite{NC06} as PFUs, is not necessary via an explicit counter-example. In Section \ref{sec:depth}, we show how the depth of MECC unitaries (which are PFUs) is related to clique-partitioning problems on Hamming graphs. In Section \ref{sec:meccunamb}, we investigate the connection between MECC and the unambiguity criterion of Ref.~\cite{DSL25}. We show that the set of  unitaries associated with unambiguous QNLWE bases (we term them ``Unambiguous Unitaries" or ``Un-U") is distinct from the set of MECC unitaries (MECC-U), neither set contains the other, they have a non-empty intersection, and both sets are strictly contained inside the set of PFUs (see Figure \ref{fig:venn}). In this sense, MECC does not imply unambiguity, unambiguity does not imply MECC, and, while both of them are sufficient for a unitary to be a PFU, neither of them is necessary. In Section \ref{sec:uniprocess}, we show how a MECC unitary can be converted into a unitary process in the sense of the process-matrix framework \cite{OCB12,AFN17} by adapting and generalizing the protocol of Ref.~\cite{KB23}.
We conclude with a discussion of our results and opportunities for future work in Section \ref{sec:conclusion}.  
\begin{figure}[htb!]
\centering
\includegraphics[scale=1]{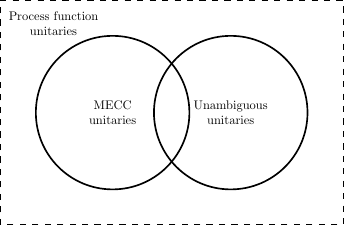}
\caption{Relationship between MECC, unambiguous, and process function unitaries.}
\label{fig:venn}
\end{figure}

\section{Preliminaries} \label{sec:prelim}
We recall below the concepts that are necessary to establish our key results. 
\subsection{Controlled unitary decomposition}
Central to our qPFU representation of unitaries is the notion of controlled unitaries. It is known that any $n$-qubit unitaries can be decomposed in terms of such unitaries:
\begin{theorem}[Section 4.5.2 of \cite{NC10} and \cite{BBC95}]\label{th:controlled-decomposition}
Any $n$-qubit unitary can be decomposed as a product of $(n-1)$-controlled one-qubit unitaries.
\end{theorem}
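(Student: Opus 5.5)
The plan follows the standard two-level decomposition argument of Section~4.5.1--4.5.2 of \cite{NC10}, in two stages.

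\textbf{Stage one: two-level unitaries.} Write $d=2^n$. A unitary is \emph{two-level} if it acts nontrivially on the span of only two computational basis vectors $\ket{x},\ket{y}$ and as the identity elsewhere. We show that every $d\times d$ unitary $U$ is a product of at most $d(d-1)/2$ two-level unitaries, by Givens-style elimination. Working on the first column, for each $j>1$ with $U_{j1}\neq 0$ we choose a two-level unitary on $\{\ket{1},\ket{j}\}$ that zeroes entry $(j,1)$; a final phase makes the $(1,1)$ entry equal to $1$. Unitarity then forces the rest of the first row to vanish, so the reduced matrix is $1\oplus U'$ with $U'$ a $(d-1)$-dimensional unitary, and induction on the dimension finishes the stage.

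\textbf{Stage two: realizing one two-level unitary.} Let $\tilde U$ act on $\{\ket{x},\ket{y}\}$ with $x,y\in\{0,1\}^n$. We pick a Gray code $x=g_1,g_2,\dots,g_m=y$ of bit strings, consecutive ones differing in exactly one bit, with $m\le n+1$. A transition $g_k\to g_{k+1}$ that flips bit $b$ is exactly the $(n-1)$-controlled $X$ on qubit $b$ whose control condition requires the other qubits to equal the shared bits of $g_k$ and $g_{k+1}$. Controls on value $0$ are fine, since this is still a controlled one-qubit gate under our notion of control condition; alternatively, conjugate by $X$ on those qubits. We apply these gates for $k=1,\dots,m-2$, which swaps $\ket{g_1}$ into position $\ket{g_{m-1}}$. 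Then $g_{m-1}$ and $g_m$ differ in one bit $b'$, and we apply $\tilde U$ (suitably oriented) as an $(n-1)$-controlled one-qubit gate on qubit $b'$, conditioned on the common bits. Finally we undo the swaps in reverse order. Each step is a permutation or a controlled single-qubit gate, and the composite acts as $\tilde U$ on $\{\ket{x},\ket{y}\}$ and trivially elsewhere.

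\textbf{Main obstacle.} The delicate point is the bookkeeping in stage two. We must check that the swap gates fix every basis state outside the Gray path, and that the middle gate hits exactly the image of $\{\ket{x},\ket{y}\}$. This holds because each multi-controlled $X$ exchanges only the two basis states satisfying its control condition. Orientation is the other subtlety: if $g_{m-1}$ has bit $b'=1$, the matrix of $\tilde U$ must be conjugated by $X$ so that rows and columns line up. Both issues are routine once the permutation picture is kept explicit. Combining the two stages gives $U$ as a product of $(n-1)$-controlled one-qubit unitaries, which is Theorem~\ref{th:controlled-decomposition}.
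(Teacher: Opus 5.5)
Your proposal is correct and is essentially the argument the paper relies on. The paper gives no proof of its own: it cites Section 4.5.2 of \cite{NC10}, which likewise first reduces to two-level unitaries and then realizes each one by a Gray-code chain of fully controlled $X$ gates around a single fully controlled one-qubit gate, and your use of value-$0$ controls matches the paper's convention for control conditions.
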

\subsection{Classical (quasi-)processes}
Let us consider the following scenario: we have $n$ parties, labeled by $k \in [n] := \{1, \dots, n\}$, such that party $k$ receives from the environment a classical input system described by a random variable $I_k$ (taking values $i_k \in \{0, \dots, |I_k| - 1\}$) and sends out to the environment a classical output system described by a random variable $O_k$ (taking values $o_k \in \{0, \dots, |O_k| - 1\})$. The environment is described by the conditional probability distribution $P_{\vec{I}|\vec{O}} : \vec{I} \times \vec{O} \rightarrow [0, 1]$, where $\vec{I} = I_1 \times\dots\times I_n$ and $O = O_1 \times\dots\times O_n$. The parties can interact with the environment through local operations $h = (h_k : I_k \rightarrow O_k)_{k \in [n]}$. 

\begin{figure}[htb!]
\centering
\includegraphics[scale=0.6]{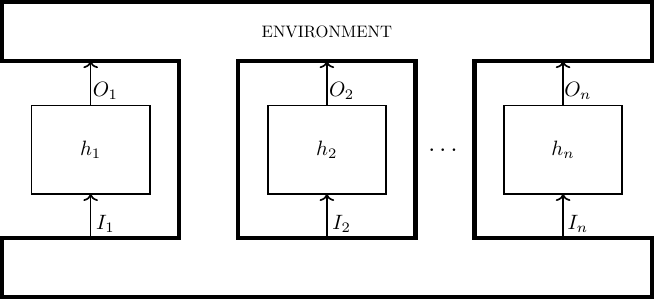}
\caption{General classical process composed with local interventions, \textit{e.g.}, as in Eq.~\eqref{eq:unique-fp}.}
\label{fig:process}
\end{figure}

We now define process functions using the same nomenclature as in Ref.~\cite{KO23}, based on a characterization by \"A.~Baumeler and S.~Wolf's in terms of fixed points \cite{BW16fp}. When the environment is an arbitrary conditional probability distribution, we call it a \textit{classical quasi-process}. If the environment is deterministic, it is described by a \textit{quasi-process function} $\omega = (\omega_k : \vec{O} \rightarrow I_k)_{k \in [n]}$ such that $P_{\vec{I}|\vec{O}} = \prod\limits_{k = 1}^n \delta_{i_k,\omega_k(\vec{o})}$. 
Otherwise, it is described by a probabilistic mixture of quasi-process functions, \textit{i.e.}, $P_{\vec{I}|\vec{O}} = \sum\limits_\lambda p_\lambda \prod\limits_{k = 1}^n \delta_{i_k,\omega^\lambda_k(\vec{o})}$. A classical quasi-process which is logically consistent is a \textit{classical process} \cite{BW16, BW16fp}. Logical consistency refers to the fact that, for some convex decomposition of the classical quasi-process, the map obtained from composing it with any choice of local interventions $h$ admits on average one fixed-point:
\begin{equation}\label{eq:logical-consistency}
\forall h, \sum\limits_\lambda p_\lambda n_\lambda = 1,
\end{equation}
where $n_\lambda = |\{\vec{i} | \vec{i} = \omega^\lambda \circ h(\vec{i}) \}|$ is the number of fixed-points of the composition $\omega^\lambda \circ h$. The quasi-process function  associated to a deterministic classical process is a \textit{process function}. In this case, logical consistency is equivalent to a \textit{unique} fixed-point condition:
\begin{theorem}[Unique fixed-point condition \cite{BW16fp}] \label{fixed-point}
A function $\omega : \vec{O} \rightarrow \vec{I}$ is a valid n-party process function if and only if for every choice of local interventions $h = (h_k : I_k \rightarrow O_k)_{k \in [n]}$, there is a unique fixed point when composing $\omega$ with h, i.e.,
\begin{equation}\label{eq:unique-fp}
    \forall h, \exists! \vec{i} \in \vec{I}: \vec{i} = \omega\circ h(\vec{i}).
\end{equation}
\end{theorem}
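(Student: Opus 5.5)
The plan is to unpack the definition of logical consistency given just above in the special case where the classical quasi-process is deterministic. The point is that a deterministic quasi-process admits only one convex decomposition into quasi-process functions. Once that is shown, the averaged fixed-point condition \eqref{eq:logical-consistency} collapses to the statement that the single function $\omega$ has exactly one fixed point for each $h$.

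\textbf{Step 1 (rigidity of the decomposition).} Let $P_{\vec{I}|\vec{O}}(\vec{i}|\vec{o})=\prod_k \delta_{i_k,\omega_k(\vec{o})}=\delta_{\vec{i},\omega(\vec{o})}$. Suppose $P_{\vec{I}|\vec{O}}=\sum_\lambda p_\lambda \delta_{\vec{i},\omega^\lambda(\vec{o})}$ with all $p_\lambda>0$ and $\sum_\lambda p_\lambda=1$. Fix any $\vec{o}$ and evaluate both sides at $\vec{i}=\omega(\vec{o})$. The left side equals $1$, so $\sum_\lambda p_\lambda\,\delta_{\omega(\vec{o}),\omega^\lambda(\vec{o})}=1$. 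Since the weights are strictly positive and sum to one, every term must have $\omega^\lambda(\vec{o})=\omega(\vec{o})$. Because $\vec{o}$ was arbitrary, $\omega^\lambda=\omega$ for every $\lambda$. Hence every convex decomposition of a deterministic quasi-process is the trivial one, up to regrouping terms.

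\textbf{Step 2 (reduction of the condition).} Substituting into \eqref{eq:logical-consistency}, every $n_\lambda$ equals $n_\omega(h):=|\{\vec{i}\,:\,\vec{i}=\omega\circ h(\vec{i})\}|$. Therefore $\sum_\lambda p_\lambda n_\lambda = n_\omega(h)$. The qualifier ``for some convex decomposition'' becomes vacuous, since all decompositions give the same value. Logical consistency of $P$ is thus equivalent to $n_\omega(h)=1$ for all $h$, which is exactly \eqref{eq:unique-fp}.

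\textbf{Step 3 (both directions).} For ``if'': assume \eqref{eq:unique-fp}. The trivial decomposition with the single term $\omega$ and weight $1$ then witnesses \eqref{eq:logical-consistency}, so $P$ is a classical process and $\omega$ is a process function. For ``only if'': assume $\omega$ is a process function. Then some decomposition satisfies \eqref{eq:logical-consistency}, and by Step 2 this forces $n_\omega(h)=1$ for every $h$.

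The only step with real content is Step 1, the extremality of deterministic conditional distributions. I expect it to be the main (mild) obstacle, chiefly in handling zero weights; these are discarded at the outset, since they contribute to neither side.
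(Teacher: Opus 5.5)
Your argument is correct. The paper gives no proof of this statement. It imports the result from Baumeler and Wolf \cite{BW16fp}, adding only the remark that for a deterministic classical process, logical consistency is equivalent to the unique fixed-point condition. Your Steps 1--3 make that remark rigorous using the paper's own definition \eqref{eq:logical-consistency}, so the comparison is between a self-contained derivation and a citation.

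\textbf{Step 1 is more than you need.} Your rigidity lemma is valid, but there is a shorter route. Write $n_\lambda=\sum_{\vec{i}}\delta_{\vec{i},\omega^\lambda(h(\vec{i}))}$ and exchange the sums. This gives $\sum_\lambda p_\lambda n_\lambda=\sum_{\vec{i}}P_{\vec{I}|\vec{O}}(\vec{i}\,|\,h(\vec{i}))$ for \emph{every} decomposition of \emph{every} quasi-process. So the qualifier ``for some convex decomposition'' is vacuous in general, not only in the deterministic case. For $P_{\vec{I}|\vec{O}}=\delta_{\vec{i},\omega(\vec{o})}$, the right-hand side is literally the number of fixed points of $\omega\circ h$. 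This also explains why the paper can phrase consistency as ``on average one fixed point.''

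\textbf{What the citation carries that your argument does not.} In the original Baumeler--Wolf framework, logical consistency is a normalization requirement on the induced distribution under arbitrary local operations, possibly stochastic. The nontrivial work there is showing that it suffices to check deterministic interventions $h$. The paper's Eq.~\eqref{eq:logical-consistency} already builds that reduction into the definition, which is why the theorem becomes essentially definitional here. Your proof is therefore complete relative to the paper's definitions, but it does not reprove that part of the cited result.
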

This means that, for every possible set of local interventions $h$, there is exactly one global input $\vec{i}$ that is returned by the process function.

\subsection{Quantum nonlocality without entanglement via process functions}
Assume that multiple distant parties share a state picked from an ensemble of pure orthogonal product states $E = \{\ket{\psi_i}\}_i$. Quantum nonlocality without entanglement (QNLWE) refers to the impossibility of perfect state discrimination---that is, identifying which state in the ensemble $E$ the parties share---when the parties are only allowed to perform local operations and communicate classically (LOCC) \cite{BDF99}. Over time, this property has been extensively studied, from its use in quantum data hiding \cite{DPL02} to its extension to a strong notion of QNLWE \cite{HBA19}, and in the case of entangled and/or mixed states \cite{B10, B11,HR25}.

While some QNLWE ensembles like the SHIFT states form a complete basis, others do not. One example of the latter type is given by unextendible product bases (UPBs, see Ref.~\cite{BDP99UPB, DPM03UPB}), which are ensembles of product states that cannot be extended to a product basis. For example, one can construct a UPB from the SHIFT basis by adding the state $\ket{+++}$ to the SHIFT basis states $\{\ket{-01}, \ket{1-0}, \ket{01-}\}$\cite{NC06}. Our focus in this paper is on QNLWE ensembles that do form a complete basis.

Certain QNLWE bases like the SHIFT basis admit a perfect discrimination protocol under local operations and process functions (LOPFs) (see Figure \ref{fig:SHIFT-LUPF} and Ref.~\cite{KB23} for details). Equivalently, the perfect discrimination can be achieved using the adjoint of the unitary $U_\text{SHIFT}$ transforming the computational basis into the SHIFT basis (see Figures \ref{fig:SHIFT-PFU} and \ref{fig:SHIFT-circuit}). This unitary can be said to encode a process function in the sense that the way it rotates a given input state from the computational basis is described by that process function. A rigorous definition of this encoding is provided in Section \ref{sec:qpfu} (Definition \ref{def:qPFU}).

\subsection{Niset-Cerf QNLWE construction}
Of special interest to us in this paper will be a construction of QNLWE ensembles due to Niset and Cerf \cite{NC06}. The Niset-Cerf construction applies to the case of $n$ parties, each holding a system of dimension at least $n-1$. In the $n=3$ case, it leads to the SHIFT ensemble. For $n>3$, however, it gives new examples of multiqudit QNLWE ensembles that are not captured by the multiqubit ensembles constructed in Ref.~\cite{KB23}. 

Recall that the tripartite SHIFT ensemble is given by
\begin{equation}
B_S := \{\ket{000}, \ket{111}, \ket{\pm01}, \ket{1\pm0},\ket{01\pm}\}.
\end{equation}
Expressed in this form, it becomes apparent that $B_S$ can be obtained from the computational basis if each party applies a Hadamard gate controlled by the state of the remaining two parties. The corresponding unitary circuit is represented in Figure \ref{fig:SHIFT-circuit}, where we explicitly label the control conditions. 

\begin{figure}[htb!]
    \centering
    \begin{quantikz}
\lstick{$A$} & \gate{H} & \ctrl{2}  \wire[l][1]["\ket{1}"{above, pos=0.2}]{a} &  \octrl{2}  \wire[l][1]["\ket{0}"{above, pos=0.2}]{a} &  \\
\lstick{$B$}& \octrl{1}  \wire[l][1]["\ket{0}"{above, pos=0.2}]{a}\wire[u]{q} & \gate{H} & \control{} \wire[l][1]["\ket{1}"{above, pos=0.2}]{a} &  \\
\lstick{$C$} & \control{} \wire[l][1]["\ket{1}"{above, pos=0.2}]{a} & \ocontrol{} \wire[l][1]["\ket{0}"{above, pos=0.2}]{a} & \gate{H}  &
\end{quantikz}
    \caption{\label{fig:SHIFT-circuit} SHIFT unitary circuit.}
\end{figure}

Starting from this example, Niset-Cerf constructed a general family of QNLWE ensembles for $n \geq 3$. Here, each party $k\in\{1,2,\dots,n\}$ has a corresponding computational basis $B_k = \{\ket{0}, \dots, \ket{d_k-1}\}$, with local dimension $d_k \geq n -1$. The Niset-Cerf QNLWE ensemble is then obtained when each party $k$ applies the generalization of the Hadamard gate---the $d_k$-dimensional quantum Discrete Fourier Transform $H_{d_k}$---controlled on the state of the remaining $n-1$ parties. In the case of three parties each holding a qubit, their construction corresponds exactly to the SHIFT unitary in Figure \ref{fig:SHIFT-circuit}. It consists in applying a global unitary $U = U_3U_2U_1$ to the joint computational basis such that 
\begin{align}\label{eq:niset-cerf}
U_1 &:= H \otimes \ketbra{01}{01} + \id_1 \otimes \big(\id_{23} - \ketbra{01}{01}\big)\\
   U_2 &:= \ketbra{1}{1}\otimes H\otimes\ketbra{0}{0}\\ \nonumber
   &+ (\id_{13} - \ketbra{10}{10})\otimes \id_2  \\
   U_3 &:= \ketbra{01}{01}\otimes H + (\id_{12} -\ketbra{01}{01})\otimes \id_3,
\end{align}
where $\id_k$ is the identity operator on the $k$-th system, and $\id_{ij}$ ($i,j\neq k$) is the identity operator on the remaining systems. 

It is worth mentioning that the ordering of the terms in the global unitary is irrelevant since the controlled unitaries commute. This comes from the fact that the controlled unitaries are \textit{mutually exclusive}, a property that we define in Section \ref{sec:mecc}. The decomposition into these three unitaries makes explicit how the SHIFT unitary encodes the AF/BW process. We will see that it corresponds exactly to its qPFU decomposition (as in Definition \ref{def:qPFU}), with the quasi-process function being the AF/BW process.

\section{Quasi-process function unitaries}\label{sec:qpfu}
In this section, we demonstrate that any \textit{n}-qubit unitary can be decomposed in a canonical way into a product of controlled single-qubit unitaries such that the way the controlled unitaries are activated is described by a quasi-process function. Such unitaries belong to the class of \textit{quasi-process function unitaries}, formally defined below:
\begin{definition}[Quasi-process function unitary (qPFU)]\label{def:qPFU}
	An $n$-partite quasi-process function unitary is a unitary $U$ that admits a decomposition of the following form:
	\begin{align}\label{eq:qpfu}
			U=U^{(1)}U^{(2)}\dots U^{(n)},
	\end{align}
where each $U^{(k)}$ ($k\in\{1,2,\dots,n\}$) is a product of controlled unitaries acting on the $k$-th system, \textit{i.e.}, 
\begin{align}
	U^{(k)}&=\prod_{i_k: \exists \vec{o}_{\backslash k}, i_k=\omega_k(\vec{o}_{\backslash k})}U^{(k,i_k)},
\end{align}
with each controlled unitary being defined as
\begin{align}\label{eq:local-U_PFU}
	U^{(k,i_k)}&=\left(\sum_{\vec{o}_{\backslash k}:\omega_k(\vec{o}_{\backslash k})=i_k}\ketbra{\vec{o}_{\backslash k}}{\vec{o}_{\backslash k}}\right)\otimes U^{(k)}_{i_k}\nonumber\\
	&+\left(\id_{\backslash k}-\sum_{\vec{o}_{\backslash k}:\omega_k(\vec{o}_{\backslash k})=i_k}\ketbra{\vec{o}_{\backslash k}}{\vec{o}_{\backslash k}}\right)\otimes 
    U_k,
\end{align}
where $\omega:=(\omega_k:\vec{O} \rightarrow I_k)_{k=1}^n$ defines a quasi-process function from $\vec{O}$ to $\vec{I}$. 

Here $U^{(k)}_{i_k}$ is a unitary that acts on the $k$-th target system if and only if the other systems are in a control state $\ket{\vec{o}_{\backslash k}}$ such that $i_k=\omega_k(\vec{o}_{\backslash k})$, $U_k$ is a fixed unitary acting on the $k$-th system (corresponding to $i_k = 0$), and $\id_{\backslash k}$ is the identity operator on the remaining systems.  
\end{definition}

\begin{theorem}\label{th:qPFU-decomposition}
Any unitary $U$ acting on \textit{n} qubits can be written as a quasi-process function unitary.
\end{theorem}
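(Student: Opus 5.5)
We will start from Theorem \ref{th:controlled-decomposition}, which lets us write any $n$-qubit unitary as a finite product $U = V_1 V_2 \cdots V_m$. Here each $V_j$ is an $(n-1)$-controlled single-qubit unitary: it acts on a target qubit $t_j$ by a unitary $W_j$ when the remaining $n-1$ qubits are in a fixed computational basis state $\ket{\vec{c}_j}$, and acts as the identity otherwise. We then massage this product into the form of Definition \ref{def:qPFU}. The first step is bookkeeping. We group consecutive gates into "layers" $U^{(k)}$ sharing the same target $k$. Since the target sequence is arbitrary, we pad the product with identity gates (allowed as trivially controlled gates) so that it reads as blocks in the order $1,2,\dots,n$. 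If necessary, we repeat this pattern; to land exactly in the form $U^{(1)}\cdots U^{(n)}$ we use an additional merging step, explained next.

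The key step is to show that all gates targeting qubit $k$ in one block combine into a single operator of the form $\sum_{\vec{o}_{\backslash k}} \ketbra{\vec{o}_{\backslash k}}{\vec{o}_{\backslash k}} \otimes W^{(k)}_{\vec{o}_{\backslash k}}$. This holds because every such gate is block diagonal in the computational basis of the other qubits, so products of them stay block diagonal with blocks multiplied. Such a uniformly controlled (multiplexed) single-qubit gate is now read off as a quasi-process function. Let the distinct unitaries appearing among the blocks be enumerated, with $U_k$ the one assigned label $0$ (say the identity, or the most common block). Define $\omega_k(\vec{o}_{\backslash k})=i_k$ whenever $W^{(k)}_{\vec{o}_{\backslash k}} = U^{(k)}_{i_k}$. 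Then $U^{(k)}=\prod_{i_k} U^{(k,i_k)}$ with the $U^{(k,i_k)}$ of Eq.~\eqref{eq:local-U_PFU}, up to the convention for the default $U_k$. Since the control projectors for different $i_k$ are disjoint, the product acts blockwise as required. The maps $\omega_k$ are then arbitrary functions, and no consistency is required of a quasi-process function, so $\omega=(\omega_k)$ is automatically a quasi-process function.

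The main obstacle is collapsing an arbitrary interleaved sequence of targets into just one layer per qubit in the fixed order $1,\dots,n$, since multiplexors on different targets do not commute. My first attempt would be to avoid the generic decomposition and instead prove directly that any $U$ factors as $M_1 M_2\cdots M_n$ with $M_k$ a multiplexed gate on qubit $k$. I would do this by induction on $n$, in the spirit of the cosine–sine decomposition. If that fails, for example by a dimension count, I would fall back on the weaker reading that the indices in Definition \ref{def:qPFU} may repeat, and allow $\omega_k$ to take values in a larger alphabet $I_k$ that also records the position in the sequence. It must then be checked that the paper intends this. In either case, the remaining verification that each factor matches Eq.~\eqref{eq:local-U_PFU} is routine.
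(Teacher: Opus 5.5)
\textbf{The unresolved step cannot be closed as the statement is written.} Your reduction matches the paper's idea: multiply the block-diagonal gates on qubit $k$ into a multiplexor $\sum_{\vec{o}_{\backslash k}}\ketbra{\vec{o}_{\backslash k}}{\vec{o}_{\backslash k}}\otimes W^{(k)}_{\vec{o}_{\backslash k}}$, read $\omega_k$ off its blocks, and note that any such $\omega$ is a quasi-process function. This is sound within a contiguous block of gates, provided you take the default $U_k=\id$ so that the factors in Eq.~\eqref{eq:local-U_PFU} multiply to the multiplexor.

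The gap is the step you left open, collapsing the interleaved sequence into one layer per qubit in the order $1,\dots,n$. Your first route for it, a direct factorization $U=M_1\cdots M_n$, is impossible.

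\emph{Dimension count.} Every $U^{(k)}$ in Definition~\ref{def:qPFU} is block diagonal in the computational basis of the other qubits. It therefore lies in the group $G_k$ of multiplexors on qubit $k$, which has dimension $2^{n-1}\cdot 4=2^{n+1}$. So the literal qPFUs $G_1G_2\cdots G_n$ form the image of a manifold of dimension $n2^{n+1}<4^n$ for $n\ge 3$. For $n=2$ the count ties at $16$, but $(M_1,M_2)\mapsto M_1M_2$ is constant on the orbits $(M_1D,D^{-1}M_2)$ of the $4$-dimensional diagonal torus $G_1\cap G_2$, so the image is only $12$-dimensional.

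\emph{Explicit counterexample.} Write $M_1=\sum_b W_b\otimes\ketbra{b}{b}$ and $M_2=\sum_a\ketbra{a}{a}\otimes V_a$. Requiring $M_1M_2\ket{01}=\ket{10}$ forces $V_0\ket{1}\propto\ket{1}$, so $V_0$ is diagonal, and $W_0\ket{0}\propto\ket{1}$. Hence $M_1M_2\ket{00}\propto W_0\ket{0}\otimes\ket{0}\propto\ket{10}\neq\ket{00}$, so SWAP is not in $G_1G_2$. Already for two qubits the cosine--sine decomposition yields the pattern $M_2M_1M_2$, and two layers cannot suffice. No induction can give one layer per qubit, and the theorem is false if Definition~\ref{def:qPFU} is read literally.

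\textbf{The paper's proof has the same hole.} It asserts that two controlled gates on the same qubit with the same control state ``can be merged.'' That holds only when they are adjacent or commute past everything in between. Applied to the interleaved product from Theorem~\ref{th:controlled-decomposition}, the step silently reorders non-commuting gates, so the obstacle you isolated is a real gap in the published argument.

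\textbf{What you can actually prove.} Your padding step already gives the provable version with no further work: every $n$-qubit unitary is a finite product $U=\prod_{r=1}^{R}\bigl(U_r^{(1)}U_r^{(2)}\cdots U_r^{(n)}\bigr)$ of qPFUs. Each round carries its own quasi-process function $\omega_r$, read off from its multiplexor blocks as you describe. Commit to this weaker statement and say explicitly that it is what you prove, rather than leaving the choice between routes open.
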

\begin{proof}
From Theorem \ref{th:controlled-decomposition}, we know that any $n$-qubit unitary can be decomposed as a product of ($n$-1)-controlled single-qubit unitaries. Moreover, any two controlled unitaries acting on the same qubit can be merged if they are activated by the same control state. Hence, for each qubit $k \in [n]$, we label the distinct nontrivial controlled unitaries acting on that qubit by $i_k \in \{1,\dots,m_k\}$ and the corresponding distinct control states by $\vec{c}_{k, i_k} \in \vec{O}_{\backslash k}:= \{0,1\}^{n-1}$. We reserve $i_k = 0$ for the identity, setting $I_k:= \{0,\dots,m_k\}$. Consider a state in the computational basis, labeled by $\vec{o} \in \vec{O}:= \{0,1\}^n$. For each qubit $k$, the state $\vec{o}_{\backslash k}$ of the remaining qubits matches at most one control state $\vec{c}_{k, i_k}$. Therefore, $\vec{o}_{\backslash k}$ corresponds to a unique controlled unitary labeled by $i_k$, or to the identity ($i_k = 0$) if no control state matches. Consequently, each $\vec{o}$ is mapped to a unique global $\vec{i} = (i_1,\dots,i_n)$, and this mapping is a well-defined function, which we interpret as a quasi-process function $\omega:=(\omega_k:\vec{O}\rightarrow I_k)_{k=1}^n$. 
\end{proof}
Motivated by the connection discovered in Ref.~\cite{KB23} between the SHIFT ensemble and the AF/BW process, we are particularly interested in characterizing the set of unitaries that can be decomposed using process functions. These correspond to a special class of qPFUs, with the additional constraint that the quasi-process function in the decomposition must be a process function.
\begin{definition}[Process function unitary (PFU)]\label{def:PFU}
A qPFU is a PFU if the quasi-process function defined in its decomposition is a valid process function.
\end{definition}
We note that, although Theorem \ref{th:qPFU-decomposition} is specific to  multiqubit unitaries, the definition of qPFUs (Definition \ref{def:qPFU}) is general enough to accomodate unitaries acting on qudits, \textit{e.g.}, the unitary obtained from the two-qutrit domino basis \cite{BDF99} (which is a qPFU that is not a PFU) depicted in Figure \ref{fig:domino-circuit}. Its qPFU decomposition is the following:
\begin{align*}
    U^{(1)} &= \bigg((H \oplus 1)\otimes \ketbra{2}{2}+\id_1\otimes(\id_2-\ketbra{2}{2})\bigg)\\&\bigg((1 \oplus H)\otimes \ketbra{0}{0}+\id_1\otimes(\id_2-\ketbra{0}{0})\bigg)\\
    U^{(2)} &= \bigg(\ketbra{0}{0} \otimes (H \oplus 1)+(\id_1-\ketbra{0}{0})\otimes\id_2\bigg)\\&\bigg(\ketbra{2}{2} \otimes (1 \oplus H)+(\id_1-\ketbra{2}{2})\otimes\id_2\bigg)
\end{align*}
It is obtained from the quasi-process function $(i_1 = o_2 \oplus_3 2 , i_2 =  o_1 \oplus_{(2)} 1)$ ($\oplus_3$ is mod 3 addition and $\oplus_{(2)}$ is the regular mod 2 addition flipping 0 and 1 but leaving 2 invariant), with $i_k = 1$ corresponding to $H \oplus 1$ and $i_k = 2$ corresponding to $1 \oplus H$.

\begin{figure}[htb!]
    \centering
\begin{quantikz}
\lstick{$A$} & \gate{H \oplus 1} & \gate{1 \oplus H} & \ctrl{1} \wire[l][1]["\ket{0}"{above, pos=0.2}]{a}  &  \ctrl{1} \wire[l][1]["\ket{2}"{above, pos=0.2}]{a}  &\\
\lstick{$B$} & \ctrl{-1} \wire[l][1]["\ket{2}"{above, pos=0.2}]{a} &  \ctrl{-1} \wire[l][1]["\ket{0}"{above, pos=0.2}]{a}  & \gate{H \oplus 1} & \gate{1 \oplus H} &
\end{quantikz} 
    \caption{\label{fig:domino-circuit} Domino unitary circuit.}
\end{figure}

\section{Process function unitaries from mutually exclusive control conditions }\label{sec:mecc}

We define another family of unitaries, namely, those satisfying a constraint that we term \textit{mutually exclusive control conditions} (MECC).

\begin{definition}[Unitary with Mutually Exclusive Control Conditions (MECC-U)]\label{def:MECC-U}
Consider a unitary $U$ whose qPFU decomposition is such that the
controlled unitaries in Eq.~\ref{eq:local-U_PFU} are of the form
\begin{align}
	U_{\vec{c}_{k,i_k}}&:=\ketbra{\vec{c}_{k,i_k}}{\vec{c}_{k,i_k}}\otimes U^{(k)}_{i_k}+(\id_{\backslash k}-\ketbra{\vec{c}_{k,i_k}}{\vec{c}_{k,i_k}})\otimes
	\id_{k} \nonumber \\ 
    &\forall k\in\{1,2,\dots,n\},  i_k\in\{1,2,\dots,m_k\}, 
\end{align}
where $\vec{c}_{k,i_k}$ is the unique $\vec{o}_{\backslash k}$ such that $\omega_k(\vec{o}_{\backslash k}) = i_k$.

The control conditions $$\bigcup\limits_{k = 1}^n\bigcup\limits_{i_k=1}^{m_k}\{\vec{c}_{k, i_k}\}$$ are said to be mutually exclusive if they partition the set of strings $\vec{O}:=\bigtimes_{k=1}^nO_k$ (where $O_k=\{0,1,\dots,d_k-1\}$) into disjoint subsets given by 
\begin{align}
	&B_{k,i_k}:=\{\vec{o} | o_k\in O_k,\vec{o}_{\backslash k}=\vec{c}_{k,i_k}\},
\end{align}
where $k\in\{1,2,\dots,n\}$, $i_k\in\{1,2,\dots,m_k\}$, and 
\begin{align}
	&B_0:=\vec{O}\backslash \left(\bigsqcup_{k,i_k} B_{k,i_k}\right).
\end{align}
A unitary that admits such a decomposition satisfying MECC is said to be an MECC Unitary (MECC-U).

\end{definition}
To put it simply, given a computational basis state of the $n$ systems, $\ket{\vec{o}}$ (where $\vec{o}\in\vec{O}$), at most one of the unitaries in the decomposition acts non-trivially on the $n$-partite system. It is this property of the control conditions that we refer to as \textit{mutual exclusivity}, taking our cue from Ref.~\cite{NC06}. The Niset-Cerf unitaries correspond to a specific type of MECC circuits, that is, MECC circuits with at least one controlled unitary for each local system. These MECC circuits are what we call \textit{genuinely multipartite}.

\begin{definition}[Genuinely multipartite MECC-U]\label{def:genuine-MECC}
An MECC-U is genuinely multipartite if the control conditions are such that there is at least one non-empty subset $B_{k, i_k}$ for every party $k$.
\end{definition}
By definition, these circuits have a least $n$ layers. It is always possible to have a number of controlled unitaries strictly greater than $n$ while preserving the mutual exclusivity. One such example is given in Figure \ref{fig:MECC-circuit}, where $U_1$ and $U_2$ are two distinct unitaries. The question of the depth of MECC circuits is treated in Section \ref{sec:depth}. 

\begin{figure}[htb!]
    \centering
    \includegraphics[scale=0.8]{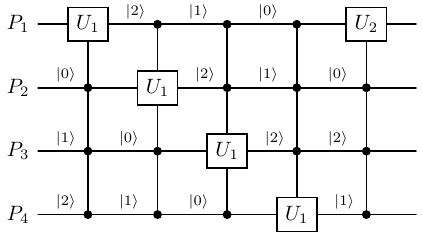}
    \caption{\label{fig:MECC-circuit} 4-party genuinely multipartite MECC unitary circuit with 5 controlled unitaries.}
\end{figure}

We now prove that every MECC-U is a PFU, \textit{i.e.}, mutual exclusivity of control conditions is a sufficient condition for a unitary to be a process function unitary. 

\begin{theorem} \label{th:meccpfu} 
All unitaries satisfying the mutual exclusivity of control conditions are process function unitaries.
\end{theorem}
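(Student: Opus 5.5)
The plan is to apply the unique fixed-point characterization (Theorem~\ref{fixed-point}) directly to the quasi-process function $\omega$ from the MECC decomposition. For an MECC-U, each nontrivial value $i_k\in\{1,\dots,m_k\}$ is attained by $\omega_k$ on exactly one control string $\vec{c}_{k,i_k}$, and $\omega_k(\vec{o}_{\backslash k})=0$ otherwise. In terms of the blocks of Definition~\ref{def:MECC-U}, this reads
\begin{equation*}
\omega_k(\vec{o})=i_k\neq 0 \iff \vec{o}\in B_{k,i_k}.
\end{equation*}
Because the blocks $B_{k,i_k}$ and $B_0$ partition $\vec{O}$, we get the key structural fact: for every $\vec{o}\in\vec{O}$, at most one component $\omega_k(\vec{o})$ is nonzero. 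Moreover, $\omega(\vec{o})=\vec{0}$ if and only if $\vec{o}\in B_0$.

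Fix arbitrary local interventions $h=(h_k:I_k\to O_k)_k$ and define the ``default'' output string $\vec{a}$ by $a_k:=h_k(0)$. By the partition property, $\vec{a}$ lies in exactly one block, and I will show that this block determines the unique fixed point.

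\emph{Existence.} There are two cases.
\begin{itemize}
\item If $\vec{a}\in B_0$, then $\omega(h(\vec{0}))=\omega(\vec{a})=\vec{0}$, so $\vec{i}=\vec{0}$ is a fixed point.
\item If $\vec{a}\in B_{k,i_k}$, let $\vec{i}$ be the input with value $i_k$ in slot $k$ and $0$ elsewhere. Then $\vec{o}:=h(\vec{i})$ agrees with $\vec{a}$ outside slot $k$, so $\vec{o}_{\backslash k}=\vec{c}_{k,i_k}$. Since $B_{k,i_k}$ contains all values of $o_k$, we have $\vec{o}\in B_{k,i_k}$. By the structural fact, $\omega(\vec{o})$ equals $i_k$ in slot $k$ and $0$ elsewhere, i.e.\ $\omega(\vec{o})=\vec{i}$, so $\vec{i}$ is a fixed point.
\end{itemize}

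\emph{Uniqueness.} Any fixed point $\vec{i}=\omega(h(\vec{i}))$ lies in the image of $\omega$, so it has at most one nonzero component.
\begin{itemize}
\item If $\vec{i}=\vec{0}$, then $h(\vec{i})=\vec{a}$ and the condition $\omega(\vec{a})=\vec{0}$ forces $\vec{a}\in B_0$.
\item If the only nonzero component is $i_j$ in slot $j$, then $\vec{o}=h(\vec{i})$ satisfies $\vec{o}_{\backslash j}=\vec{a}_{\backslash j}$. The condition $\omega_j(\vec{o})=i_j\neq0$ forces $\vec{a}_{\backslash j}=\vec{c}_{j,i_j}$, i.e.\ $\vec{a}\in B_{j,i_j}$.
\end{itemize}
Disjointness of the blocks therefore pins down whether $\vec{i}=\vec{0}$ and, if not, the pair $(j,i_j)$, hence $\vec{i}$ itself. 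So there is exactly one fixed point for every $h$, and Theorem~\ref{fixed-point} shows that $\omega$ is a process function. The MECC-U is thus a PFU by Definition~\ref{def:PFU}.

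The main obstacle is being careful that the fixed-point equation constrains all components simultaneously. In the existence case, the components $l\neq k$ must also come out as $0$. This holds only because exclusivity guarantees $\vec{o}\in B_{k,i_k}$ is in no other block; here we use that each $B_{k,i_k}$ is closed under changing $o_k$, which is what lets $h_k(i_k)$ differ from $a_k$ without leaving the block. Everything else is bookkeeping of the partition.
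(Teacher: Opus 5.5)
Your proof is correct and follows essentially the same route as the paper's. Both arguments restrict to fixed points with at most one nonzero component, use that $h(i_k\vec{e}_k)$ agrees with $h(\vec{0})$ outside slot $k$, and case on which block of the MECC partition $h(\vec{0})$ falls into. Your version is more careful than the paper's in that it proves existence and uniqueness separately, whereas the paper only asserts that the identified candidate is the only fixed point.
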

\begin{proof}
Let $U$ be a MECC unitary acting on $n$ systems. Take the quasi-process $\omega := (\omega_k:\vec{O}_{\backslash k}\rightarrow I_k)_{k=1}^n$ defined by the structure of the given MECC-U, as described in Definition \ref{def:MECC-U}. We show that it satifies the fixed-point condition and thus that $U$ is in the set PFU.

The function $\omega$ takes values in the set $\Tilde{I}  = \big(\bigcup\limits_{k=1}^n\{i_k\vec{e}_{k} | i_k \in I_k \backslash \{0\}\}\big) \cup \vec{0} \subset \vec{I}$. 
This follows from the mutual exclusivity of the control conditions for different layers of the unitary $U$, \textit{i.e.}, for any $\vec{o}\in\vec{O}$, either exactly one of the parties applies a non-trivial local unitary ($\ket{\vec{o}} \in B_{k, i_k}\implies \vec{i}=i_k\vec{e}_k$, if this is the party $k$) or none of them do ($\ket{\vec{o}} \in B_0\implies\vec{i}=\vec{0}$). We can thus restrict ourselves to inputs of this form as these are the only possible fixed points when composing the function with local interventions. Now, for every choice of local interventions $h$, we can show that there is exactly one fixed point in this set. Indeed, for any choice of local operations acting on the set $\Tilde{I}$, since $(i_k\vec{e}_{k})_{\backslash k} = \vec{0}_{\backslash k} \forall k \in [n]$ and the operations $h$ are local, we have that  

\begin{equation}\label{eq}
h(i_k\vec{e}_{k})_{\backslash k} = h(\vec{0})_{\backslash k} \ \forall k \in [n].   
\end{equation}

In our construction, given a party $k$, the $k$-th component of $\omega$ only depends on the state of the systems of the remaining parties. From Eq.~\eqref{eq}, we can see that the vector $h(\vec{0})$ contains all the information about the control state for each party after a choice of local operations. If $h(\vec{0})_{\backslash k}$ does not correspond to a valid control state for any $k$, we have that $\omega_k(h(\vec{0})_{\backslash k}) = 0 \ \forall k$, and therefore $\vec{0}$ is the only fixed point. However, if it corresponds to a valid control state for a certain party $k'$ and choice of unitary $i_{k'} \in I_{k'}\backslash\{0\}$, we have that 
\begin{equation}\omega_{k'}(h(i_{k'}\vec{e}_{k'})_{\backslash k'}) = \omega_{k'}(h(\vec{0})_{\backslash k'}) = i_{k'}, \end{equation} where we used Eq.~\eqref{eq}. The exclusivity of the control conditions imposes  that all the remaining components of $w \circ h$ are null:
\begin{align}
\omega_k(h(i_k\vec{e}_{k})_{\backslash k}) &= \omega_k(h(\vec{0})_{\backslash k}) = 0 \\ 
&\forall i_k \in I_k, k \neq k'.    
\end{align}

Therefore, in this case, $i_{k'}\vec{e}_{k'}$ is the only fixed point.
\end{proof}

Theorem \ref{th:meccpfu} applies to the set of controlled unitary gates constructed by Niset-Cerf. Indeed, these unitaries are of the form $U = H^{d_1}_{\vec{c}_{\backslash 1}}...H^{d_n}_{\vec{c}_{\backslash n}}$, where $H^d$ is the $d$-dimensional Hadamard gate and the control conditions are cyclic permutations of the vector $(0, \dots, n-2)$. We provide the explicit form of the Niset-Cerf process functions in Table \ref{tab:nisetcerfPF}.

\begin{table}[]
    \centering
    \begin{tabular}{c|c}
       $\vec{i}$  &  $\vec{o}$\\
       \hline\\
       $\vec{e}_1$ & $\{(l_1,0,1,\dots,n-2)\}_{l_1=0}^{d_1-1}$\\
       $\vec{e}_2$ & $\{(n-2,l_2,0,1,\dots,n-3)\}_{l_2=0}^{d_2-1}$\\
       \vdots & \vdots\\
       $\vec{e}_k$ & $\{(n - k, n - k + 1, \dots,n-2,l_k,0,1,\dots,n - 1 -k)\}_{l_k=0}^{d_k-1}$\\
       \vdots & \vdots\\
       $\vec{e}_n$ & $\{(0,1,2,\dots,n-2,l_n)\}_{l_n=0}^{d_n-1}$\\
       $\vec{0}$  & $\vec{o}\in\vec{O}\backslash\vec{O}_{\rm ctrl}$  
    \end{tabular}
    \caption{Process function constructed from an $n$-qudit Niset-Cerf circuit, where qudit $j\in\{1,2,\dots,n\}$ has local dimension $d_j\geq n-1$. For $n=3$ and $d_1=d_2=d_3=2$, we recover the AF/BW process function corresponding to the SHIFT basis. The set $\vec{O}_{\rm ctrl}$ corresponds to valid control conditions.}
    \label{tab:nisetcerfPF}
\end{table}

For $n = 4$, one example of a process function corresponding to a unitary satisfying the mutual exclusivity condition is given by the generalization of the AF/BW process function found in M. Araújo et al. \cite{AGB17} (see Table \ref{tab:MECC-PF}).

\begin{table}[htb!]
    \centering
    \begin{tabular}{c|c}
$\vec{i}$ & $\vec{o}$ \\
        \hline
      $(0,0,0,0)$  &  $(0,0,0,0)$, $(0,1,0,1)$, $(0,1,1,1)$, $(1,0,1,0)$, \\
       & $(1,0,1,1)$, $(1,1,0,1)$, $(1,1,1,0)$, $(1,1,1,1)$  \\
      $(1,0,0,0)$  &  $(0,0,0,1)$, $(1,0,0,1)$ \\
      $(0,1,0,0)$  &  $(1,0,0,0)$, $(1,1,0,0)$ \\
      $(0,0,1,0)$  &  $(0,1,0,0)$, $(0,1,1,0)$ \\
      $(0,0,0,1)$  &  $(0,0,1,0)$, $(0,0,1,1)$ \\
    \end{tabular}
    \caption{4-party process function corresponding to a MECC unitary.}
    \label{tab:MECC-PF}
\end{table}

However, the converse of Theorem \ref{th:meccpfu} is not true, \textit{i.e.}, there exist PFUs that do not satisfy the mutual exclusivity of control conditions. 
An example of this type of PFU for $n = 4$ can be found by looking at another generalization of the AF/BW process from \"{A} Baumeler \textit{et al.}~\cite{BGR22} (see Table \ref{tab:non-MECC-PF}).

\begin{table}[htb!]
    \centering
    \begin{tabular}{c|c}
$\vec{i}$ & $\vec{o}$ \\
        \hline
      $(0,0,0,0)$  &  $(0,0,0,0)$, $(0,1,1,1)$ \\
      $(0,1,0,0)$  &  $(0,0,0,1)$, $(1,1,0,0)$ \\
      $(0,0,1,0)$  &  $(0,1,0,0)$, $(0,1,1,0)$ \\
      $(0,0,0,1)$  &  $(0,0,1,1)$, $(1,1,1,0)$ \\
      $(1,1,0,0)$  &  $(0,1,0,1)$, $(1,1,0,1)$ \\
      $(1,0,0,1)$  &  $(0,0,1,0)$, $(1,0,1,0)$ \\
      $(0,1,0,1)$  &  $(1,0,0,0)$, $(1,1,1,1)$ \\
      $(0,1,1,1)$  &  $(1,0,0,1)$, $(1,0,1,1)$ \\
    \end{tabular}
    \caption{4-party process function corresponding to a non-MECC unitary.}
    \label{tab:non-MECC-PF}
\end{table}
This process function can be used to construct a unitary without mutually exclusive control conditions (depicted in Figure \ref{fig:4party_non-mecc}). Indeed, we can choose to interpret each local input $i_k = 1$ as the application of a Hadamard. For each party, we list all the distinct vectors $\vec{o}_{\backslash k}$ such that $\omega_k(\vec{o}_{\backslash k}) = 1$. Then, we construct the unitary circuit with each of these vectors defining the control conditions for a single Hadamard. Here, we have that two or more parties can act at the same time. For example, looking at the fifth row from the top, we have that $i_1 = i_2 = 1$ since the vector $(0, 1, 0, 1)$ satisfies the control conditions $\vec{o}_{\backslash 1}=(1,0,1)$ and $\vec{o}_{\backslash 2}=(0,0,1)$ while the vector $(1, 1, 0, 1)$ satisfies $\vec{o}_{\backslash 1} = \vec{o}_{\backslash 2} = (1,0,1)$, for parties $1$ and $2$, respectively. 

\begin{figure}[htb!]
    \centering
    \begin{subfigure}[t]{0.2\textwidth}
    \includegraphics[scale=0.7]{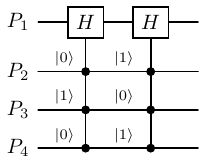}
    \caption{Party 1}  
    \end{subfigure}~
    \begin{subfigure}[t]{0.2\textwidth}
    \includegraphics[scale=0.7]{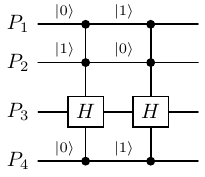}
    \caption{Party 3}
    \end{subfigure}
    
    \begin{subfigure}[b]{0.2\textwidth}
    \includegraphics[scale=0.7]{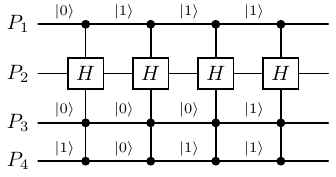}
    \caption{Party 2}
    \end{subfigure}~
    \begin{subfigure}[b]{0.25\textwidth}
    \includegraphics[scale=0.7]{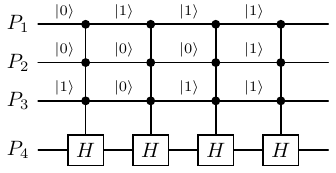}
    \caption{Party 4}
    \end{subfigure}
    \caption{Non-MECC circuit corresponding to the 4-party process function in Table \ref{tab:non-MECC-PF}.}
\label{fig:4party_non-mecc} 
\end{figure}

\section{Depth of a MECC circuit}\label{sec:depth}

Looking at the definition of MECC unitaries, one might wonder about how to characterize the depth of such circuits. For example, it was noted by Niset and Cerf that circuits with more layers would generate bases exhibiting more QNLWE \cite{NC06}. Additionally, in light of the relationship between QNLWE and non-causal process functions (\cite{KB23}, \cite{DSL25}), this depth might tell us something about the causal structure of the process function in the PFU decomposition.

\begin{figure}[htb!]
    \centering
    \includegraphics[scale=0.6]{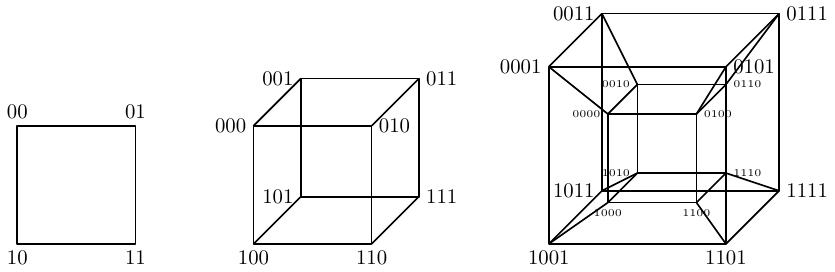}
     \caption{For $n$-qubits, the generalized Hamming graph corresponds to the hypercube graph, depicted from $n = 2$ to $n = 4$.}
     \label{fig:hypercubes}
\end{figure}

It turns out that the task of finding mutually exclusive control conditions can be recast into a combinatorial problem in graph theory. Given $n$ parties holding a composite system of dimension $d = d_1 \cdots d_n$, ensuring MECC corresponds to grouping the strings in $\vec{O} = \{0, ...,d_1-1\}\times\dots\times\{0, ...,d_n-1\}$ into disjoint subsets (these correspond to the subsets $B_{k,i_k}$ in Definition \ref{def:MECC-U}) such that the number of positions any two strings in a given subset differ by (\textit{i.e.}, their Hamming distance) is exactly one. 

We can define a graph such that the vertices are the strings in $\vec{O}$ and two vertices are connected by an edge if and only if their Hamming distance is one. This graph belongs to the class of generalized Hamming graphs.

\begin{definition}[Generalized Hamming graph]\label{def:hamming}
A generalized Hamming graph $H(n, [d_1,\dots,d_n])$ is a graph such that the set of vertices are the $n$-tuples in the mixed alphabet $\vec{O} = \bigtimes_{k=1}^n \{0, ..., d_k-1\}$ and two vertices $\vec{v}, \vec{w} \in \vec{O}$ are connected by an edge if and only if there is exactly one position $k \in [n]$ for which $v_k \neq w_k$.
\end{definition} 

Given a party $k$, one control condition is equivalent to a set of vertices in $H(n, [d_1,\dots,d_n])$, all adjacent to each other, also known as a \textit{clique}.

\begin{definition}[Clique]
Given a graph $G = (V, E)$, a clique $C \subseteq V$ is a set of vertices such that every two distinct vertices are adjacent 
\end{definition}
Since any clique in $H(n, [d_1,\dots,d_n])$ contains vertices differing in a single position $k$, such a clique contains at most $d_k$ vertices. Consequently, a control condition is precisely a clique of size $d_k$, i.e. a maximal clique in direction $k$. It is then easy to see that finding mutually exclusive control conditions is equivalent to finding a set of disjoint maximal cliques in this class of graphs. Restricting ourselves to genuinely multipartite MECC (see Definition \ref{def:genuine-MECC}), we consider sets of disjoint maximal cliques such that, for every position $k \in \{1, ...,n\}$, there is at least one clique composed of vertices differing only in that position.  

\subsection{Minimum depth}
The minimum depth of a genuinely multipartite MECC circuit is $l_{min} = n$. In that case, every party's choice of a local basis is controlled by the state of the remaining parties. The associated process function is such that no component $\omega_k$ is a constant function and is thus a process function \textit{without global past} \cite{KB23}.

\begin{theorem}
Let $U$ be a MECC-U acting on $n$ systems, then $U$ is a genuinely multipartite MECC-U if and only if its PFU decomposition is given by a process function without global past. 
\end{theorem}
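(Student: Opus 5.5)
The plan is to reduce both directions to one equivalence. Write $\omega=(\omega_k)_{k=1}^n$ for the process function in the PFU decomposition of the MECC-U $U$; by Theorem \ref{th:meccpfu} this is a genuine process function. Following the paper's usage (and Ref.~\cite{KB23}), a process function \emph{has a global past} if some party $k$ has $\omega_k$ constant, since such a party's input does not depend on anyone's output. The theorem then amounts to:
\begin{equation*}
\exists k:\ \omega_k \text{ constant} \iff \exists k:\ B_{k,i_k}=\emptyset \ \text{for all } i_k\in\{1,\dots,m_k\}.
\end{equation*}
I will prove both implications through the contrapositive of the statement.

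\textbf{Genuinely multipartite $\Rightarrow$ no global past.} Fix a party $k$. Genuine multipartiteness gives some $i_k\neq 0$ with $B_{k,i_k}\neq\emptyset$, so the control string $\vec{c}_{k,i_k}$ exists and $\omega_k(\vec{c}_{k,i_k})=i_k\neq 0$. It then suffices to find some $\vec{o}_{\backslash k}$ with $\omega_k(\vec{o}_{\backslash k})=0$. I will use mutual exclusivity together with $n\ge 2$. Pick another party $j\neq k$ and change the $j$-th entry of $\vec{c}_{k,i_k}$; when $d_j\ge 2$ this gives a string $\vec{c}'$. Any full string $\vec{o}$ with $\vec{o}_{\backslash k}=\vec{c}'$ differs from elements of $B_{k,i_k}$ only in coordinates $j$ and $k$. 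Since party $k$ has only one control condition per label $i_k$, either $\vec{c}'$ is itself a control string for $k$ with a different label, or $\omega_k(\vec{c}')=0$.

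The main obstacle is excluding the case where every $\vec{o}_{\backslash k}$ is a control condition for $k$. In that case the sets $B_{k,\cdot}$ alone cover all of $\vec{O}$. Mutual exclusivity then forces every other $B_{j,i_j}$ to be empty, which contradicts genuine multipartiteness as long as $n\ge 2$. Hence some $\vec{o}_{\backslash k}$ has $\omega_k=0$, so $\omega_k$ is nonconstant for every $k$, and $\omega$ has no global past.

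\textbf{No global past $\Rightarrow$ genuinely multipartite.} Argue contrapositively. Suppose some party $k$ has all $B_{k,i_k}$ empty. By Definition \ref{def:MECC-U}, $\omega_k$ takes the value $i_k\neq 0$ exactly on the control strings $\vec{c}_{k,i_k}$, and there are none. So $\omega_k\equiv 0$ is constant, and party $k$ is in the global past. This direction is immediate.

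Finally, I will note that the same counting gives the remark preceding the theorem. Each party needs at least one controlled unitary, and these act on distinct targets with disjoint controls, so they cannot be merged. This gives $l_{\min}=n$, and the Niset-Cerf circuits attain it.
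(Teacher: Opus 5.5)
Your proposal is correct and follows essentially the same route as the paper. The direction ``no global past $\Rightarrow$ genuinely multipartite'' rests on the same observation: a non-constant $\omega_k$ must take some nonzero value $i_k$ and hence yield a nonempty $B_{k,i_k}$ (you simply phrase it contrapositively). For the converse, you give an explicit justification of the paper's terse claim that each $\omega_k$ takes at least two values. Your intermediate observation $\omega_k(\vec{c}')\neq i_k$, which follows from uniqueness of the control string per label, already gives non-constancy, so the later covering/disjointness argument that $\omega_k$ vanishes somewhere is valid but not needed.
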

\begin{proof}
We first prove that process functions without global past imply genuinely multipartite MECC and then the converse.
\begin{enumerate}
\item Let $\omega$ be a process function without global past. By definition, no global past means that every party $k$ receives a signal from at least some other party $k'$ through $\omega$, i.e, 
\begin{align}\label{eq:no-global-past}
&\forall k \ \exists k'\neq k, o_{k'}, \tilde{o}_{k'}: \nonumber \\ &\omega_k(o_{k'}, \vec{o}_{\backslash \{k,k'\}}) \neq  \omega_k(\tilde{o}_{k'}, \vec{o}_{\backslash \{k,k'\}}).   
\end{align}
Therefore, at least one of $(o_{k'}, \vec{o}_{\backslash \{k,k'\}})$ or $(\tilde{o}_{k'}, \vec{o}_{\backslash \{k,k'\}})$ is a valid control condition corresponding to a unique $i_k \neq 0$. Consequently, there is at least one non-empty $B_{k, i_k}$ for every party; $U$ is a genuinely multipartite MECC-U.
    
    \item Let $U$ be a genuinely multipartite MECC. This implies that every party changes their basis at least once, controlled on the state of the others. Therefore, every component of the process function in the PFU decomposition takes at least two values and the definition of no global past (Eq.~\ref{eq:no-global-past}) is satisfied.
\end{enumerate}
\end{proof}

\subsection{Maximum depth}
To find the maximum depth of MECC circuits, we consider sets of maximal cliques which cover all the vertices, i.e. a partition. The maximum depth is then given by maximizing the number of cliques in the partition. When $d_k = 2 \ \forall k$,  $H(n, [d_1,\dots,d_n])$ corresponds to the $n$-dimensional hypercube graph $Q_n$ (see Figure \ref{fig:hypercubes}). In that case, the cliques of interest are edges and the partitioning problem is known as the \textit{perfect matching} problem.
\begin{definition}[Perfect matching in a graph]
Given a graph $G = (V, E)$, a perfect matching $M \subseteq V$ is a set of edges such that every vertex in $V$ is adjacent to a unique edge in $M$.
\end{definition}
We now return our attention to genuinely multipartite MECC unitaries. The perfect matching problem, when restricted to genuinely multipartite matchings, has been solved in Ref.~\cite{matchings} for hypercubes: for $n > 3$, one can always find a genuinely multipartite perfect matching in $Q_n$. Since a perfect matching in $Q_n$ contains exactly $2^{n-1}$ edges, we immediately have the following result:
\begin{theorem}\label{th:max-depth-qubit}
For $n > 3$ the maximum depth of a genuinely multipartite $n$-qubit MECC circuit is $l_{max} = 2^{n -1}$.
\end{theorem}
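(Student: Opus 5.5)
The plan is to prove an upper bound and then a matching construction, using the clique reformulation set up just before the statement. In the $n$-qubit case every $d_k=2$. A control condition for party $k$ is a string $\vec{c}_{k,i_k}\in\{0,1\}^{n-1}$, and its block $B_{k,i_k}$ is the pair of strings obtained by inserting $0$ or $1$ at position $k$. This is exactly one edge of the hypercube $Q_n$ in direction $k$. Here the depth of a MECC circuit means the number of nontrivial controlled unitaries in its decomposition, i.e., the number of blocks $B_{k,i_k}$.

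\textbf{Upper bound.} By Definition~\ref{def:MECC-U}, the blocks $B_{k,i_k}$ are pairwise disjoint subsets of $\vec{O}=\{0,1\}^n$, and each has exactly two elements. Their number is therefore at most $|\vec{O}|/2=2^{n-1}$. Equality holds exactly when $B_0=\emptyset$, that is, when the chosen edges form a perfect matching of $Q_n$. This direction needs no genuine multipartiteness and holds for every $n$.

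\textbf{Achievability.} To reach $2^{n-1}$ we need a perfect matching of $Q_n$ that uses every direction $k\in[n]$ at least once; this is the genuine multipartiteness condition of Definition~\ref{def:genuine-MECC}. I would invoke the result of Ref.~\cite{matchings} that such matchings exist for $n>3$.

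From such a matching $M$ I build the circuit as follows. For each edge of $M$ in direction $k$, read off its common substring $\vec{c}\in\{0,1\}^{n-1}$ and take it as a control condition for party $k$ with target unitary $H$ (any non-identity single-qubit unitary works). Because the edges are disjoint, a single party never receives the same control string twice. Hence the quasi-process function $\omega_k(\vec{o}_{\backslash k})=i_k$ is well defined, and the blocks partition $\vec{O}$. The resulting unitary is therefore MECC with depth $2^{n-1}$, and by Theorem~\ref{th:meccpfu} it is also a PFU.

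\textbf{Main obstacle.} The main difficulty is the existence of a genuinely multipartite perfect matching, which I take from Ref.~\cite{matchings}. If a self-contained argument were required, I would first settle $n=4$ by an explicit matching. I would then induct by writing $Q_{n+1}=Q_n\times\{0,1\}$ and copying a genuinely multipartite matching of $Q_n$ onto one half. On the other half I would use a matching of $Q_n$ that contains at least one edge in direction $n+1$; to free up such an edge, I would replace two parallel edges of a $4$-cycle by the two edges in the new direction. The remark ``$n>3$'' is also worth explaining: for $n=3$, any perfect matching of $Q_3$ using all three directions is impossible. Indeed, $Q_3$ has $4$ edges in each direction, and a perfect matching has $4$ edges in total. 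A quick parity check shows that the direction counts must all be even, so they cannot all be positive. This is why SHIFT uses only $3$ layers and $B_0\neq\emptyset$. I would not need to prove this fact, only note it as context for the hypothesis.
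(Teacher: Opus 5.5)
Your proposal is correct and follows the paper's argument. The MECC blocks are pairwise disjoint edges of $Q_n$, so there are at most $2^{n-1}$ of them, with equality exactly for a perfect matching; the existence of a perfect matching using all $n$ directions for $n>3$ is taken from Ref.~\cite{matchings}, just as in the paper. Your extras are also sound: the parity argument justifies the $n\le 3$ impossibility that the paper only asserts. The induction sketch should, however, place the same matching $M$ on both halves $Q_n\times\{0\}$ and $Q_n\times\{1\}$ and swap one pair of copied parallel edges for two edges in direction $n+1$, since a matching inside a single half cannot contain a direction-$(n+1)$ edge.
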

A simple example achieving this maximum depth for $n = 4$ qubits is given by the process function in Table \ref{tab:perf-MECC-PF} and its corresponding MECC circuit in Figure \ref{fig:4party_perf-mecc}.
\begin{table}[htb!]
    \centering
    \begin{tabular}{c|c}
$\vec{i}$ & $\vec{o}$ \\
        \hline
      $(1,0,0,0)$  & $(0,1,1,0)$, $(1,1,1,0)$ \\
      $(2,0,0,0)$  & $(0,1,1,1)$, $(1,1,1,1)$ \\
      $(0,1,0,0)$  & $(1,0,0,0)$, $(1,1,0,0)$ \\
      $(0,2,0,0)$  & $(1,0,0,1)$, $(1,1,0,1)$ \\
      $(0,0,1,0)$  & $(0,0,0,0)$, $(0,0,1,0)$ \\ 
      $(0,0,2,0)$  & $(0,0,0,1)$, $(0,0,1,1)$ \\
      $(0,0,0,1)$  & $(0,1,0,0)$, $(0,1,0,1)$ \\ 
      $(0,0,0,2)$  & $(1,0,1,0)$, $(1,0,1,1)$
    \end{tabular}
    \caption{4-party process function corresponding to a maximum depth MECC unitary. See Fig.~\ref{fig:4party_perf-mecc} for the structure of the unitary.}
    \label{tab:perf-MECC-PF}
\end{table}

\begin{figure}[htb!]
    \centering
    \begin{subfigure}[h]{0.2\textwidth}
    \includegraphics[scale=0.7]{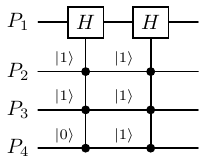}
    \caption{Party 1}  
    \end{subfigure}~
    \begin{subfigure}[h]{0.2\textwidth}
    \includegraphics[scale=0.7]{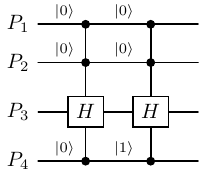}
    \caption{Party 3}
    \end{subfigure}
    
    \begin{subfigure}[h]{0.2\textwidth}
    \includegraphics[scale=0.7]{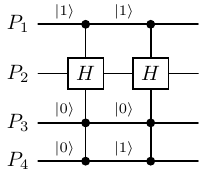}
    \caption{Party 2}
    \end{subfigure}~
    \begin{subfigure}[h]{0.2\textwidth}
    \includegraphics[scale=0.7]{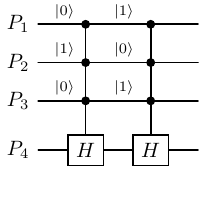}
    \caption{Party 4}
    \end{subfigure}
    \caption{Maximum depth MECC circuit corresponding to the 4-party process function in Table \ref{tab:perf-MECC-PF}.}
\label{fig:4party_perf-mecc} 
\end{figure}
 We conjecture that the result of Ref.~\cite{matchings} can be generalized to arbitrary $[d_1,\ldots,d_n]$, with $d_k \geq 2$. Note that $n  > 3$ is necessary in general as it can be verified that genuinely multipartite partitioning is impossible for $n \leq 3$ for any choice of $[d_1, \dots, d_n]$.
\begin{conjecture}
For $n > 3$, there exists a genuinely multipartite partitioning into maximal cliques of $H(n, [d_1, . . . , d_n])$.
\end{conjecture}

\section{Mutual exclusivity versus unambiguity}\label{sec:meccunamb}
In this section, we are interested in the logical relationship between MECC and \textit{unambiguity}, a property of product bases defined by H. Dourdent \textit{et al.}~\cite{DSL25} and which is shown to be equivalent to the fixed-point condition of process functions in certain cases.

Consider an arbitrary product basis $B = \{\ket{\psi^j} = \ket{\psi_1^j}\otimes\dots\otimes\ket{\psi_n^j}\}_{j=1}^{|B|}$. For each party $k \in [n]$, take the set of distinct local states  $S_k = \{\ket{\psi^j_k}\}_{j=1}^{s_k \leq |B|}$. This local set can be divided into $m_k$ local orthonormal bases $S_{i_k}$: $S_k = \bigcup\limits_{i_k = 0}^{m_k - 1}S_{i_k} \forall k\in[n]$. We can label the global states $\ket{\psi^j}$ in the basis as $(\vec{o}^j|\vec{i}^j)$, where $i^j_k \in \{0, ..., m_k-1\}$ is the index of the local basis $S_{i^j_k}$ to which the local state $\ket{\psi^j_k}$ belongs and $o^j_k \in \{0,\dots,d_k -1\}$ is the position of $\ket{\psi^j_k}$ in that local basis. Unambiguous product bases are such that each local vector is assigned a unique local label $(o^j_k|i^j_k)$. 

\begin{definition}[Unambiguous product basis \cite{DSL25}]
An $n$-partite orthonormal product basis $B = \{\ket{\psi^j} = \ket{\psi_1^j}\otimes\dots\otimes\ket{\psi_n^j}\}_{j=1}^{|B|}$ is said to be unambiguous if, for every party $k$, two distinct local vectors $\ket{\psi_k^j} \neq \ket{\psi_k^{j'}}$ are orthogonal if and only if they belong to the same local basis $S_{i_k}$: 
\begin{equation}\label{def:unambiguity}
\braket{\psi_k^j}{\psi_k^{j'}}= 0 \iff \exists i_k \textrm{ such that } \ket{\psi_k^j},\ket{\psi_k^{j'}}\in S_{i_k}.
\end{equation}
\end{definition}    
Note that unambiguity implies the disjointness of the local bases $S_{i_k}$ for every party. The equivalence between unambiguous bases and process functions is due to the fact that the global labels $(\vec{o}^j|\vec{i}^j)$, if seen as a mapping from outputs $\vec{O}$ to inputs $\vec{I}$, define a valid process function \cite{DSL25}.

One example of an unambiguous basis is the SHIFT basis as each party chooses between the disjoint bases $\{\ket{0}, \ket{1}\}$ or $\{\ket{+}, \ket{-}\}$. This corresponds to the set of labels $\{(000|000), (000|111), (100|001), (100|101),$ $(010|100), (010|110), (001|010), (001|011)\}$, which is exactly the AF/BW process function. The SHIFT unitary is thus what we term an \textit{unambiguous unitary}.

\begin{definition}[Unambiguous unitary (Un-U)]
A unitary $U$ is an unambiguous unitary if it rotates the computational basis into an unambiguous product basis.
\end{definition}

Since unambiguous bases correspond to process functions, it is natural to expect that unambiguous unitaries are in the set PFU. We show this below.

\begin{theorem}
Every unambiguous unitary is a PFU.\end{theorem}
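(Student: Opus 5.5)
The plan is to build, for a given unambiguous unitary $U$, an explicit qPFU decomposition whose quasi-process function is exactly the labelling map $(\vec{o}^j|\vec{i}^j)$ of the unambiguous basis. We then invoke the result of Ref.~\cite{DSL25} that this labelling defines a valid process function. We fix the unambiguous product basis $B=U\{\ket{\vec{o}}\}$, the local bases $S_{i_k}$ with $i_k\in\{0,\dots,m_k-1\}$, and the labels $(\vec{o}^j|\vec{i}^j)$. For each party $k$ and each $i_k$ we choose a local unitary $V^{(k)}_{i_k}$ with $V^{(k)}_{i_k}\ket{o_k}=\ket{\psi_k}$, where $\ket{\psi_k}$ is the vector of $S_{i_k}$ in position $o_k$. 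If $S_{i_k}$ has fewer than $d_k$ elements, we complete it arbitrarily.

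The first step is to show that the map $\omega:\vec{o}^j\mapsto\vec{i}^j$ is a well-defined function on all of $\vec{O}$, with $\omega_k$ depending only on $\vec{o}_{\backslash k}$, as required by Definition~\ref{def:qPFU}. Unambiguity gives each local vector a unique label $(o_k|i_k)$. Hence distinct basis elements have distinct strings $\vec{o}^j$: otherwise two distinct product states would have, for every $k$, either the same local vector or two non-orthogonal ones, and could not be orthogonal. Counting, $|B|=\prod_k d_k$, so $j\mapsto\vec{o}^j$ is a bijection onto $\vec{O}$, provided each $S_{i_k}$ is a full basis of size $d_k$; this I would argue from completeness of $B$. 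The dependence of $i_k$ only on $\vec{o}_{\backslash k}$ is part of the process-function structure proven in Ref.~\cite{DSL25}. Failing that, we can prove it directly: if $\vec{o}$ and $\vec{o}'$ differ only at $k$, then the corresponding states must be orthogonal on party $k$ alone, which forces the same local basis.

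The second step assembles the unitary. For each $k$ we set $U^{(k)}=\sum_{\vec{o}_{\backslash k}}\ketbra{\vec{o}_{\backslash k}}{\vec{o}_{\backslash k}}\otimes V^{(k)}_{\omega_k(\vec{o}_{\backslash k})}$, grouped by value $i_k$ exactly as in Eq.~\eqref{eq:local-U_PFU} with $U_k=V^{(k)}_0$. Each $U^{(k)}$ is diagonal in the computational basis on the controls and acts only on system $k$. We then compare $U^{(1)}\cdots U^{(n)}\ket{\vec{o}}$ with $U\ket{\vec{o}}$.

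The main obstacle is that the $U^{(k)}$ do not commute in general: applying $U^{(n)}$ first rotates qubit $n$ out of the computational basis, which spoils the controls read by the other layers. I would handle this not through commutation but by the fact that $\omega$ is a process function, which has a causal-order-like decomposition. Processes from unambiguous bases can be ordered so that the relevant control conditions are read before being rotated. Alternatively, I would redefine the target: $U$ is only specified up to the choice of basis, so it suffices that the product yields the same set of basis states up to relabelling of the computational basis and phases. If this fails, the fallback is to declare $U$ a PFU via Definition~\ref{def:PFU} using the decomposition ``controls on the computational-basis labels $\vec{o}_{\backslash k}$.'' This is justified because the controlled local unitaries applied to a product state $\ket{\vec{o}}$ each act on a different tensor factor, with controls evaluated on the input labels. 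Making that evaluation order rigorous is the delicate point.
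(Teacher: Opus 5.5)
Your fallback route is the paper's proof. The paper takes the labels $(\vec{o}^j|\vec{i}^j)$ to be a process function by Ref.~\cite{DSL25}, rewrites $B=\{\bigotimes_{k}U^{(k)}_{\omega_k(\vec{o}_{\backslash k})}\ket{\vec{o}}\}$, and declares these label-controlled local unitaries to be the PFU decomposition. It never discusses the order of the factors in $U^{(1)}\cdots U^{(n)}$. In effect it reads every control condition on the input label $\vec{o}_{\backslash k}$, and under that reading the statement follows immediately from Ref.~\cite{DSL25}.

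Your first step is a correct and more explicit version of what the paper outsources:
\begin{itemize}
\item Unambiguity plus orthogonality makes $j\mapsto\vec{o}^j$ injective.
\item Counting then makes it bijective. Your proviso about full local bases is not needed here, since $\vec{o}^j\in\vec{O}$ automatically.
\item Your argument that $i_k$ cannot change when only $o_k$ changes is right, and it also shows that every $S_{i_k}$ has $d_k$ elements.
\end{itemize}

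The proposal goes wrong in your main attempt to make the ordering rigorous. Unambiguous bases correspond to arbitrary process functions, including non-causal ones, so in general there is no causal order to exploit. SHIFT works only because MECC makes the gates commute. A fixed causal order works if later parties' gates are applied first. Beyond such cases the route fails outright.

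Suppose party $z$'s multiplexed gate is applied last. System $z$ was only ever a control before that, so the state just before it is $\ket{\phi}\otimes\ket{o_z}$, and the output is $\sum_{x}\braket{x}{\phi}\ket{x}\otimes U^{(z)}_{\omega_z(x)}\ket{o_z}$. This equals the product target only if $\omega_z$ is constant on the computational support of the other target factors. That support contains both values of every party $l$ with $\omega_l(\vec{o}_{\backslash l})=1$, since those parties have been Hadamard-rotated.

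For the function of Table~\ref{tab:non-MECC-PF} with Hadamards, this condition fails for every $z$:
\begin{itemize}
\item For $z=1$, take $\vec{o}=(0,0,0,1)$: party 2 is rotated and $\omega_1(o_2,0,1)=o_2$.
\item For $z=2,3,4$, take $\vec{o}=(0,0,1,0)$, $(1,0,0,0)$, and $(0,1,0,1)$ respectively.
\end{itemize}
So no ordering of the four gates produces this unambiguous basis.

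Relabelling or rephasing does not rescue this either. A wrong order can output entangled states, and no permutation or phase choice of a product basis accounts for those. Already for Eq.~\eqref{eq:non-mecc_basis}, applying $P_1$'s gate before $P_2$'s sends $\ket{001}$ to $\tfrac{1}{\sqrt{2}}(\ket{0}\ket{+}+\ket{1}\ket{0})\ket{1}$.

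So drop those routes and commit to the fallback. State explicitly that the decomposition is understood as $U\ket{\vec{o}}=\bigotimes_k U^{(k)}_{\omega_k(\vec{o}_{\backslash k})}\ket{o_k}$, with controls evaluated on input labels; that matches the paper. If you want the literal ordered product of Definition~\ref{def:qPFU}, the gap you flagged is real, and neither your proposal nor the paper's proof closes it.
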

\begin{proof}
Given an unambiguous product basis $B = \{\ket{\psi^j} = \ket{\psi_1^j} \otimes\dots\otimes\ket{\psi_n^j}\}_{j=1}^{|B|}$, the labels $(\vec{o}^j|\vec{i}^j)$ as described earlier define a valid process function $\omega : \vec{O} \rightarrow \vec{I}$ \cite{DSL25}. In particular, we can rewrite the basis as 
\begin{equation}\label{eq:unambiguous-unitary}
B = \left\{\bigotimes\limits_{k = 1}^n U^{(k)}_{\omega_k(\vec{o}_{\backslash k})}\ket{\vec{o}}\Big| \vec{o} \in \vec{O}\right\},
\end{equation}
where $U^{(k)}_{\omega_k(\vec{o}_{\backslash k}) =: i_k}$ is the local unitary rotating the local computational basis of party $k$ (which is labeled by $i_k=0$) to the local basis $S_{i_k}$. The unitary rotating the computational basis to the unambiguous basis $B$ thus admits a PFU decomposition with the local unitaries in Eq.~\ref{eq:local-U_PFU} chosen to be the $U^{(k)}_{\omega_k(\vec{o}_{\backslash k})=:i_k}$ we just defined.
\end{proof}

Now, we show that MECC and unambiguity do not imply each other by constructing simple counter-examples. 
\begin{figure}[htb!]
    \centering
    \begin{subfigure}[b]{0.5\textwidth}
        \centering
        \includegraphics[scale=0.9]{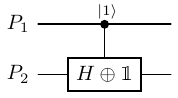}   
        \caption{Ambiguous and MECC unitary}
        \label{fig:ambiguous_circuit}
     \end{subfigure}
    \begin{subfigure}[b]{0.25\textwidth}
        \centering
        \includegraphics[scale=0.8]{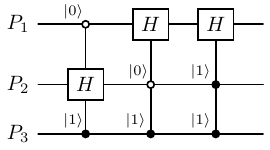}   
        \caption{Unambiguous non-MECC unitary}
        \label{fig:non-mecc_circuit}
     \end{subfigure}~
     \begin{subfigure}[b]{0.25\textwidth}
        \centering
        \includegraphics[scale=0.75]{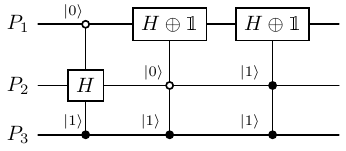}  \caption{Ambiguous and non-MECC unitary}
        \label{fig:non-mecc_ambiguous_circuit}
     \end{subfigure}
    \caption{Three distinct types of PFUs.}
\end{figure}

\begin{theorem}
There are MECC unitaries that are not unambiguous unitaries and vice-versa.
\end{theorem}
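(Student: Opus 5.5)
The plan is to prove both non-implications by exhibiting explicit small counter-examples, namely the circuits of Figures \ref{fig:ambiguous_circuit} and \ref{fig:non-mecc_circuit}. For each circuit I will write down the resulting product basis by applying the controlled unitaries to every computational basis state. I will then check the two defining properties directly.

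\textbf{MECC but not unambiguous.} I will take a genuinely multipartite MECC circuit, for instance a SHIFT-like three-qubit circuit. In it, at least one party uses two \emph{different} nontrivial local unitaries under distinct control conditions, or the same party's bases are chosen so that they overlap non-orthogonally in a forbidden way. Concretely, party 1 applies $H$ under one control condition and a non-Clifford rotation $R$ (say $R=R_y(\pi/4)$) under another. The other parties use controls as in Figure \ref{fig:SHIFT-circuit}. Mutual exclusivity is checked by verifying that the sets $B_{k,i_k}$ are pairwise disjoint, so the circuit is MECC by Definition \ref{def:MECC-U}. To break unambiguity, I will look at party 1's set of distinct local vectors, $\{\ket{0},\ket{1},\ket{\pm},R\ket{0},R\ket{1}\}$. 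Unambiguity requires these to split into disjoint orthonormal bases, with vectors in different bases never orthogonal. Either some vector is orthogonal to vectors in two different bases, or a simpler variant works: a party applies $X$, so its rotated basis $\{\ket{1},\ket{0}\}$ coincides with the computational one, and the partition into disjoint bases $S_{i_k}$ fails. I expect the $X$ variant to be the cleanest witness. In it, the distinct local vectors of that party are just $\ket{0},\ket{1}$, so the labels $(o|i)$ cannot be assigned uniquely, which violates the requirement in the definition.

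\textbf{Unambiguous but not MECC.} I will take a two-layer circuit in which two parties act simultaneously on the same computational state. The simplest choice is a product unitary such as $H\otimes H$ on two qubits, or a variant with one control, as in Figure \ref{fig:non-mecc_circuit}. The resulting basis is $\{\ket{\pm\pm}\}$, or a controlled variant of it. It is unambiguous because each party's local vectors split into the disjoint bases $\{\ket{0},\ket{1}\}$ and $\{\ket{+},\ket{-}\}$, whose cross overlaps are $1/\sqrt{2}\neq 0$.

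\textbf{Main obstacle.} The hardest part is showing that \emph{no} qPFU decomposition of this unitary is MECC, since MECC is defined by the existence of some decomposition. My approach is to use the basis itself as an invariant. A MECC circuit maps each computational state to a state in which at most one party's local vector differs from a computational basis vector. The basis $\{\ket{\pm\pm}\}$ contains a product state in which two tensor factors are both non-computational. Since the product decomposition of a product state is unique up to phases, no MECC decomposition can produce it. I will also verify that this argument handles the possible reordering or merging of gates allowed by Theorem \ref{th:qPFU-decomposition}.
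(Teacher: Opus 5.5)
The gap is in your first half: neither witness you propose is actually ambiguous. In the $X$ variant the output is a permutation of the computational basis. Unambiguity is a property of the \emph{basis}, computed from each party's set of \emph{distinct} local vectors. For the party applying $X$ that set is just $\{\ket{0},\ket{1}\}$, a single orthonormal basis whose vectors receive the unique labels $(0|0)$ and $(1|0)$. So the condition holds trivially and this unitary is an unambiguous unitary; that the circuit used two gate labels for the same local basis plays no role.

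The $R_y(\pi/4)$ variant fails for a structural reason. On a qubit the only state orthogonal to $\ket{\psi}$ is $\ket{\psi^\perp}$, so two orthonormal qubit bases either coincide as sets of states or are disjoint with all cross overlaps nonzero. In a MECC unitary each block $B_{k,i_k}$ is a full line in direction $k$, and the complement of their union is a union of such lines. Hence party $k$'s local vectors always come in complete bases $\{U^{(k)}_{i_k}\ket{0},U^{(k)}_{i_k}\ket{1}\}$, possibly together with $\{\ket{0},\ket{1}\}$. It follows that every multi-qubit MECC unitary is unambiguous, and no qubit circuit can witness MECC $\nRightarrow$ unambiguity.

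The missing idea is to use local dimension at least $3$, where a local unitary can rotate a subspace while fixing a vector. The paper takes a qubit controlling $H\oplus 1$ on a qutrit. This is trivially MECC (a single controlled gate) and yields $\{\ket{00},\ket{01},\ket{02},\ket{1+},\ket{1-},\ket{12}\}$. Here $\ket{2}$ is orthogonal to both $\ket{0}$ and $\ket{+}$, which would force $\ket{0}$ and $\ket{+}$ into the same local basis although $\braket{0}{+}\neq 0$. So no admissible partition exists.

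Your second half is correct, and in one respect more careful than the paper, which only remarks that its particular circuit activates two Hadamards on $\ket{001}$. Your invariant rules out every decomposition at once: a MECC unitary sends each $\ket{\vec{o}}$ to $\ket{\vec{o}_{\backslash k}}\otimes U^{(k)}_{i_k}\ket{o_k}$ for at most one $k$, so at most one tensor factor is non-computational. It applies equally to $H\otimes H$ ($\ket{00}\mapsto\ket{++}$) and to the paper's example ($\ket{001}\mapsto\ket{++1}$). One small correction: for $H\otimes H$ alone each party's distinct local vectors are only $\ket{\pm}$. So unambiguity holds trivially, not via the split into $\{\ket{0},\ket{1}\}$ and $\{\ket{+},\ket{-}\}$, which applies only to the controlled variant.
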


\begin{proof}
\begin{enumerate}
\item MECC $\nRightarrow$ unambiguity:
    The product basis $\{\ket{00},\ket{01}, \ket{02}, \ket{1+},\ket{1-}, \ket{12}\}$ is ambiguous as the second party's local states belong to the two overlapping bases $\{\ket{0}, \ket{1}, \ket{2}\}$ and $\{\ket{+}, \ket{-}, \ket{2}\}$.
    
However, this basis can be generated from the trivially MECC circuit represented in Figure \ref{fig:ambiguous_circuit}, where the second party's Hadamard acts on the qubit subspace $\{\ket{0}, \ket{1}\}$ and leaves the state $\ket{2}$ invariant. This unitary corresponds to the causal process function $(i_1 = 0, i_2 = o_1)$ from outputs $\{0, 1\} \times \{0, 1, 2\}$ to inputs $\{0\} \times \{0, 1\}$.

 \item  Unambiguity $\nRightarrow$ MECC: Consider the unambiguous product basis 
    \begin{align}\label{eq:non-mecc_basis}
    \begin{split}
    &\{\ket{000}, \ket{++1}, \ket{010}, \ket{+-1}, \\
    &\ket{100}, \ket{-01}, \ket{110}, \ket{-11}\}.  
    \end{split}   
    \end{align}
The corresponding unitary circuit is given in Figure \ref{fig:non-mecc_circuit}. This circuit is not MECC as the input state $\ket{0, 0, 1}$ activates the Hadamard gate for both parties, $P_1$ and $P_2$. It corresponds to the causal process function $(i_1 = o_3, i_2 = (o_1\oplus 1)o_3, i_3 = 0)$ from outputs $\{0, 1\} \times \{0, 1\} \times \{0, 1\}$ to inputs $\{0, 1\} \times \{0, 1\} \times \{0\}$.
\end{enumerate} 
\end{proof}

\begin{theorem}
There are process function unitaries that are neither MECC nor unambiguous unitaries. 
\end{theorem}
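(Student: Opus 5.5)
The statement is existential, so I would prove it with a single explicit example, as the previous theorem did. The plan is to take a unitary built from a valid process function whose control conditions overlap, and to choose local unitaries whose product basis is ambiguous. The natural candidate is the circuit of Figure~\ref{fig:non-mecc_ambiguous_circuit}. I would make its construction explicit by combining the two mechanisms used in the proof of the previous theorem:
\begin{itemize}
\item non-exclusive control conditions, taken from the causal process function $(i_1 = o_3,\ i_2 = (o_1\oplus 1)o_3,\ i_3=0)$ of Eq.~\eqref{eq:non-mecc_basis};
\item overlapping local bases, obtained by letting one party be a qutrit on which a Hadamard acts only on the subspace $\{\ket{0},\ket{1}\}$ and leaves $\ket{2}$ invariant.
\end{itemize}

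Concretely, I would take three parties with output sets $\{0,1\}\times\{0,1,2\}\times\{0,1\}$ and input sets $\{0,1\}\times\{0,1\}\times\{0\}$. I would use a causal quasi-process function of the same shape as before, say $i_1=o_3$, $i_2=(o_1\oplus1)o_3$, $i_3=0$. For party 1 the choice $i_1=1$ applies $H$. For party 2, now a qutrit, $i_2=1$ applies $H\oplus 1$.

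There are three things to check.
\begin{enumerate}
\item \textbf{Process function.} The function $\omega$ is causal, with order $3\prec 1\prec 2$: $\omega_3$ is constant, $\omega_1$ depends only on $o_3$, and $\omega_2$ depends only on $o_1,o_3$. For any local interventions $h$, the fixed point is therefore determined sequentially: $i_3=0$, then $i_1=h_3(0)$, then $i_2$ is fixed by $h_1(i_1)$ and $h_3(0)$. This gives uniqueness, so by Theorem~\ref{fixed-point} the unitary is a PFU. The decomposition is exactly that of Definition~\ref{def:qPFU}.
\item \textbf{Not MECC.} The computational state $\ket{0,0,1}$ (or $\ket{0,1,1}$) satisfies the control condition of both party 1 and party 2. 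Two controlled unitaries therefore act nontrivially on the same basis state, violating mutual exclusivity, exactly as in the earlier counterexample. Since overlaps are forced by $\omega$ itself, no other MECC decomposition should exist. To argue this, I would note that an MECC decomposition yields a PF taking values only in $\tilde I$, and compare the resulting product basis.
\item \textbf{Ambiguous.} I would write out the product basis explicitly. Party 2's local vectors include $\ket{0},\ket{1},\ket{2}$ and $\ket{+},\ket{-},\ket{2}$. The vector $\ket{2}$ appears in both local orthonormal sets, and it must appear, e.g., from $\ket{0,2,1}$ via $(H\oplus1)\ket{2}=\ket{2}$. So $\ket{2}$ is orthogonal to $\ket{0}$ and to $\ket{+}$, while $\ket{0}$ and $\ket{+}$ are not orthogonal to each other. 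No partition into mutually orthogonal disjoint subsets then satisfies Eq.~\eqref{def:unambiguity}, and the basis is ambiguous.
\end{enumerate}

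The main obstacle is the robustness of the two negative claims. Being MECC and being Un-U should be properties of the unitary $U$, not of one chosen decomposition, so the argument must rule out every alternative representation. For ambiguity this is straightforward, because the product basis $U\ket{\vec o}$ is determined by $U$ up to phases, and the ambiguity argument depends only on that set of local vectors. For non-MECC, I would argue from the basis as well. An MECC unitary maps each $\ket{\vec o}$ to a state differing from the computational basis on at most one party. In our basis, the state $U\ket{0,0,1}=\ket{+,+,1}$ is rotated on two parties simultaneously, which an MECC circuit cannot produce. I must also ensure that the qutrit states like $\ket{1,2,1}$ remain product, which is automatic.
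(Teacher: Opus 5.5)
Your proposal is correct and follows the paper's proof: it uses the same causal process function $(i_1=o_3,\ i_2=(o_1\oplus 1)o_3,\ i_3=0)$, with one party upgraded to a qutrit whose Hadamard acts only on $\mathrm{span}\{\ket{0},\ket{1}\}$, so that $\ket{2}$ sits in two overlapping local bases and forces ambiguity. The differences are minor and favourable to you: you put the qutrit on party 2, which matches the paper's stated output alphabet, whereas the paper's listed basis puts it on party 1, where $\omega_2$ must then send $o_1=2$ to $0$. You also give a decomposition-independent non-MECC argument, namely that any MECC unitary leaves all but one party of $\ket{\vec{o}}$ in the computational basis while $U\ket{0,0,1}=\ket{+,+,1}$ does not, and this strengthens the paper's circuit-level remark.
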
 
\begin{proof}
Consider the following product basis
\begin{align}
\begin{split}
    &\{\ket{000}, \ket{++1}, \ket{010}, \ket{+-1}, \ket{100}, \ket{-01}, \\
    &\ket{110}, \ket{-11}, \ket{200}, \ket{201}, \ket{210}, \ket{211}\}.    
\end{split}
\end{align}
This product basis includes the states of the one defined in Eq.~\eqref{eq:non-mecc_basis}, with the difference that the first party holds a qutrit instead of a qubit. It can be generated from a circuit almost identical to the non-MECC circuit (Figure \ref{fig:non-mecc_circuit}), where this time the first party's Hadamard acts only on the qubit subspace $\{\ket{0}, \ket{1}\}$ and leaves the state $\ket{2}$ invariant. It is thus ambiguous since that party's local vectors belong to the intersecting bases $\{\ket{0}, \ket{1}, \ket{2}\}$ and $\{\ket{+}, \ket{-}, \ket{2}\}$. In spite of that, it corresponds to a causal process function, $(i_1 = o_3, i_2 = (o_1\oplus 1)o_3, i_3 = 0)$, with the same algebraic expression as in the case of Figure \ref{fig:non-mecc_circuit} but from outputs $\{0, 1\} \times \{0, 1, 2\} \times \{0, 1\}$ to inputs $\{0, 1\} \times \{0, 1\} \times \{0\}$. The unitary is depicted in Figure \ref{fig:non-mecc_ambiguous_circuit}.
\end{proof}

\section{Unitary Processes from Process function unitaries}\label{sec:uniprocess}
Given that process functions originally arose from considerations of indefinite causal order in the process matrix framework \cite{OCB12,BW16}, a natural question that arises is whether process function unitaries teach us anything about the implementation of the process function as a unitary process. Although we do not have a general construction for arbitrary PFUs, here we show how, given a MECC unitary (which is a PFU, \textit{cf.}~Theorem \ref{th:meccpfu}), we can construct a unitary purification of the associated process function in the process-matrix framework. Conceptually, this proceeds in two steps:

\begin{algorithm}[H]
\renewcommand{\thealgocf}{}
\caption{Unitary process 
from MECC-U}\label{protocol}
\begin{enumerate}
    \item Using the MECC-U, extract the process function using the protocol outlined in Ref.~\cite{KB23} for the case of the SHIFT measurement.
    
    \item Using the extracted process function, construct the associated unitary process as described in Ref.~\cite{AFN17}.
\end{enumerate}
\end{algorithm}

We illustrate this procedure for the case of the SHIFT PFU in Figure \ref{fig:unitarypurification} and prove that a similar construction would work for any MECC unitary under some restriction on the local unitaries. This is motivated by the fact that any process function can be extended into a reversible process function (by adding additional parties, see Theorem \ref{th:rev-pf} in Appendix \ref{sec:appendix-a}, adapted from Ref.~\cite{B17}) and that every reversible (process) function can be encoded into a unitary operator (see Corollary \ref{cor:unitary-pf}), albeit one which may not correspond to a valid unitary process.

First, we recall the extraction protocol described in Ref.~\cite{KB23} and give a necessary condition for its implementation. Consider an arbitrary product basis $B = \{\ket{\psi^j} = \ket{\psi_1^j}\otimes\dots\otimes\ket{\psi_n^j}\}_{j=1}^{|B|}$ and label the states as described in the beginning of Section \ref{sec:meccunamb}. Now, assume that the parties locally generate a state in the computational basis (labeled by $\vec{o}\in\vec{O}:=\{0,1\}^n$), feed it to a device measuring in the basis $B$, and then receive from this device a classical description of the post-measurement state. In other words, if the measured state is $\ket{\psi^j} = \bigotimes\limits_{k = 1}^n \ket{\psi^j_k}$, each party receives the label $\psi^j_k$. We want to construct local functions $f_k : L_k \rightarrow I_k$ that associate each description of a local vector to a unique local basis $i_k\in I_k$, with $L_k = \{\psi^j_k\}_{j=1}^{s_k \leq |B|}$ being the set of the local descriptions of party $k$'s distinct local vectors. 

Suppose that each local vector $\ket{\psi^j_k}$ is uniquely associated to a label $(o_k^j|i_k^j)$. This is only possible if the local bases $S_{i_k}$ are disjoint. In that case, an obvious choice of $f_k$ is therefore a function defined via the local labels $(o_k^j|i_k^j)$, \textit{i.e.}, if party $k$ measured $\ket{\psi^j_k}$, the function $f_k$ will output the same $i^j_k$ as specified by the corresponding local label $(o_k^j|i_k^j)$.

\begin{lemma}
Consider an n-partite product basis $B$ such that for every $k \in [n]$ and any two distinct $i_k \neq i'_k$:
\begin{equation}\label{eq:disjoint-local-bases}
    S_{i_k} \cap S_{i'_k}  = \emptyset. 
\end{equation}
Then the local labels $(o_k^j|i_k^j)$ define valid local functions $f_k : \psi^j_k \mapsto i_k^j$.
\end{lemma}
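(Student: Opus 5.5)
The lemma asks us to show that the assignment $f_k : \psi^j_k \mapsto i^j_k$ is a well-defined function on $L_k$. Equivalently, every element of $L_k$ must receive exactly one image in $I_k$. The plan is to check existence and uniqueness of that image separately, for a fixed but arbitrary party $k \in [n]$.

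\emph{Existence.} By the construction recalled at the beginning of Section~\ref{sec:meccunamb}, the set of distinct local vectors $S_k = \{\ket{\psi^j_k}\}_j$ is covered by the local orthonormal bases, $S_k = \bigcup_{i_k} S_{i_k}$. Hence every local vector $\ket{\psi^j_k}$, whose description $\psi^j_k$ lies in $L_k$, belongs to at least one $S_{i_k}$. Its label $(o^j_k|i^j_k)$ is then defined by taking $i^j_k$ to be the index of such a basis and $o^j_k$ to be its position in that basis. So $f_k$ assigns at least one value to every element of $L_k$.

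\emph{Uniqueness.} This is the only step that uses the hypothesis. Suppose two global states $\ket{\psi^j}$ and $\ket{\psi^{j'}}$ share the same local vector for party $k$, i.e.\ $\psi^j_k = \psi^{j'}_k$, but were assigned $i^j_k \neq i^{j'}_k$. Then this common vector lies in $S_{i^j_k} \cap S_{i^{j'}_k}$, contradicting Eq.~\eqref{eq:disjoint-local-bases}. Hence $i^j_k = i^{j'}_k$. The same reasoning shows that a single vector cannot be placed in two bases at once, so its basis index is unique.

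The index $i^j_k$ therefore depends only on $\psi^j_k$ and not on the global index $j$ or on any choice made in labeling. This makes $f_k$ a well-defined map $L_k \to I_k$. One can also note that the position $o^j_k$ within a fixed basis is unique, since the elements of a basis are distinct, so the full label $(o^j_k|i^j_k)$ is likewise unique.

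The main subtlety is not technical but a bookkeeping issue. One must be clear that the labels are attached to local \emph{vectors}, that is, to elements of $L_k$, rather than to global indices $j$. Once this is stated explicitly, the disjointness hypothesis gives well-definedness directly.
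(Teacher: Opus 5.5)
Your proposal is correct and follows essentially the same argument as the paper: disjointness of the local bases $S_{i_k}$ ensures that each local vector $\ket{\psi^j_k}$ carries a unique basis index $i^j_k$, so $f_k$ is well defined. You also make explicit the existence half, namely that the $S_{i_k}$ cover $S_k$; the paper leaves this implicit.
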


\begin{proof}
Since the local bases are disjoint, any local vector $\ket{\psi^j_k}$ is uniquely identified by some label $(o_k^j|i_k^j)$. Therefore, every party $k$ can uniquely assign the basis label $i_k^j$ to the local vector's classical  description $\psi^j_k$, defining a valid function function $f_k : \psi^j_k \mapsto i_k^j$.
\end{proof}
Now, consider a product basis that is generated by a PFU. It is of the form $B =\left\{\bigotimes\limits_{k = 1}^n U^{(k)}_{i_k = \omega_k(\vec{o}_{\backslash k})}\ket{\vec{o}} | \vec{o} \in \vec{O}\right\}$, where $U^{(k)}_{i_k}$ are the local unitaries in Eq.~\eqref{eq:local-U_PFU}. Assuming that this basis satisfies Eq.~\eqref{eq:disjoint-local-bases}, we obtain the following equivalent property of the local unitaries as a corollary:

\begin{corollary}\label{cor:local-unitaries}
Consider an n-partite product basis $B$ generated by a PFU $U$ such that for every $k \in [n]$ and any two distinct $i_k \neq i'_k$:
\begin{equation}\label{eq:unitaries-disjoint-local-bases}
    U^{(k)}_{i_k} \ket{o_k} \neq U^{(k)}_{i'_k} \ket{o_k} \ \forall o_k \in O_k.
\end{equation}
Then the local labels $(o_k^j|i_k^j)$ define valid local functions $f_k : \psi^j_k \mapsto i_k^j$.
\end{corollary}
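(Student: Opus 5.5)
The plan is to reduce the corollary to the preceding lemma. That requires showing that the condition in Eq.~\eqref{eq:unitaries-disjoint-local-bases} on the local unitaries implies the disjointness condition in Eq.~\eqref{eq:disjoint-local-bases} on the local bases. First I will identify the local bases from the PFU. For a basis $B$ generated by a PFU $U$, the state labelled by $\vec{o}$ is $\bigotimes_k U^{(k)}_{\omega_k(\vec{o}_{\backslash k})}\ket{o_k}$. So party $k$'s local vectors are all of the form $U^{(k)}_{i_k}\ket{o_k}$, and the natural local basis indexed by $i_k$ is
\[
S_{i_k} = \{U^{(k)}_{i_k}\ket{o_k} : o_k \in O_k\}.
\]
This set is orthonormal because $U^{(k)}_{i_k}$ is unitary. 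The label $(o_k|i_k)$ of a local vector is then its position $o_k$ in this basis together with the basis index $i_k$, matching the labelling convention at the start of Section~\ref{sec:meccunamb}.

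Next I will argue disjointness. Suppose some vector lies in $S_{i_k}\cap S_{i'_k}$ with $i_k\neq i'_k$. Then $U^{(k)}_{i_k}\ket{o_k} = U^{(k)}_{i'_k}\ket{o'_k}$ for some $o_k,o'_k$. Hypothesis~\eqref{eq:unitaries-disjoint-local-bases} directly excludes the case $o_k=o'_k$. Once Eq.~\eqref{eq:disjoint-local-bases} holds, the lemma applies verbatim and gives the valid local functions $f_k:\psi^j_k\mapsto i^j_k$.

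The main obstacle is the off-diagonal case $o_k\neq o'_k$, which the hypothesis as literally stated does not cover. For example, $U_{i'} = U_i X$ on a qubit satisfies Eq.~\eqref{eq:unitaries-disjoint-local-bases}, yet $S_i=S_{i'}$ as sets. In that situation each local vector receives two labels, $(0|i)$ and $(1|i')$, so $f_k$ would not be well defined. I will handle this in one of two ways:
\begin{enumerate}
\item Read Eq.~\eqref{eq:unitaries-disjoint-local-bases} in the intended sense $U^{(k)}_{i_k}\ket{o_k}\neq U^{(k)}_{i'_k}\ket{o'_k}$ for all $o_k,o'_k$, which is literally the statement that the images of the computational basis are disjoint, so that the equivalence with Eq.~\eqref{eq:disjoint-local-bases} becomes immediate.
\item Observe that, without this reading, the corollary as stated needs the stronger hypothesis, with the counterexample above.
\end{enumerate}
I would present the first reading, remarking that equality up to a global phase should be treated as equality of local vectors. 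I would also note that the converse direction (disjoint bases imply the unitary condition) is the trivial specialization $o_k=o'_k$, which justifies calling the property ``equivalent''.
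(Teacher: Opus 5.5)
Your proof follows the paper's route: the paper gives no separate argument, but asserts that Eq.~\eqref{eq:unitaries-disjoint-local-bases} is ``equivalent'' to the disjointness condition Eq.~\eqref{eq:disjoint-local-bases} for $S_{i_k}=\{U^{(k)}_{i_k}\ket{o_k}\}_{o_k\in O_k}$ and then invokes the preceding lemma. Your objection to that claimed equivalence is correct, and it is a flaw in the statement, not in your argument: the CNOT, viewed as a PFU with $\omega_1=o_2$, $\omega_2=0$, $U^{(1)}_0=\id$, $U^{(1)}_1=X$, satisfies the literal hypothesis, yet party~1's vector $\ket{0}$ carries both labels $(0|0)$ and $(1|1)$, so the corollary (and the extraction step that relies on it) holds only under your strengthened reading that $U^{(k)}_{i_k}\ket{o_k}$ and $U^{(k)}_{i'_k}\ket{o'_k}$ are not proportional for any $o_k,o'_k$.
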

We can now state the main theorem of this section:
\begin{theorem}\label{th:purification}
Given a MECC-U such that its local unitaries (in the PFU decomposition) satisfy Eq.~\eqref{eq:unitaries-disjoint-local-bases}, the steps described in \hyperref[protocol]{Protocol} provide a unitary process purifying the associated process function.
\end{theorem}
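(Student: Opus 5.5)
We split the argument according to the two steps of the \hyperref[protocol]{Protocol}, since each step reduces to a result already available.

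\textbf{Step 1: extraction of the process function.} Let $U$ be an MECC-U with local unitaries $U^{(k)}_{i_k}$ satisfying Eq.~\eqref{eq:unitaries-disjoint-local-bases}. By Theorem~\ref{th:meccpfu}, $U$ is a PFU whose quasi-process function $\omega$ is a valid process function. It generates the product basis $B=\{\bigotimes_k U^{(k)}_{\omega_k(\vec{o}_{\backslash k})}\ket{\vec{o}}\}$.

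By Corollary~\ref{cor:local-unitaries}, each party has a well-defined local function $f_k:\psi_k^j\mapsto i_k^j$. We then run the extraction protocol of Ref.~\cite{KB23}:
\begin{enumerate}
\item the parties prepare $\ket{\vec{o}}$ locally;
\item the device measures in the basis $B$;
\item each party applies $f_k$ to its local label.
\end{enumerate}
Since $B$ is orthonormal and $\ket{\vec{o}}$ is mapped by $U$ to the basis element labelled $(\vec{o}|\omega(\vec{o}))$, measuring $U^\dagger$-rotated outcomes deterministically returns $i_k=\omega_k(\vec{o}_{\backslash k})$. The key check is that the MECC structure makes this input-output map exactly $\omega$.

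There is a subtlety here. One must verify that the map from $\vec{o}$ to $\vec{i}$ is well defined even though local vectors for $i_k=0$ and $i_k\neq 0$ may coincide on some $o_k$. Eq.~\eqref{eq:unitaries-disjoint-local-bases} is precisely what excludes such coincidences. In the MECC case, the global state for $\vec{o}\in B_{k,i_k}$ differs from the computational one only in slot $k$, so only party $k$ sees a nontrivial label.

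\textbf{Step 2: unitary purification.} Given the process function $\omega$, we invoke Theorem~\ref{th:rev-pf} to extend it, by adding parties (a global past $P$ and a global future $F$), to a reversible process function $\tilde\omega$. By Corollary~\ref{cor:unitary-pf}, $\tilde\omega$ is encoded as a permutation unitary $V_{\tilde\omega}\ket{\vec{o},p}=\ket{\tilde\omega(\vec{o},p)}$. Following Ref.~\cite{AFN17}, we then take the process vector $\dket{V_{\tilde\omega}}$ obtained from the Choi vector of $V_{\tilde\omega}$.

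It remains to show that this is a valid unitary process, i.e.\ that composing it with any local unitary instruments yields a unitary from $P$ to $F$. We would argue this via the unique fixed-point property (Theorem~\ref{fixed-point}) applied to the classical basis. For every choice of classical local interventions $h$, exactly one fixed point exists. Hence the link product maps basis states to basis states bijectively, and linearity plus the construction of Ref.~\cite{AFN17} give unitarity. Tracing out $P,F$ after preparing $P$ in a fixed state and decohering in the computational basis recovers $\omega$, which establishes that the process is a purification.

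\textbf{Main obstacle.} The hardest step is the validity of the process beyond classical interventions. Reversibility of the extended function is known to suffice (Ref.~\cite{AFN17}, building on Ref.~\cite{B17}), but one must confirm that the extension of Theorem~\ref{th:rev-pf} preserves logical consistency with the added parties. I would first try to show directly that the extended function $\tilde\omega$ remains a process function, by reducing its fixed points to those of $\omega$ with $p$ fixed.
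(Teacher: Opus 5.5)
Your Step 2 has a genuine gap: it never uses MECC. If it were correct, it would show that every process function admits a unitary purification, which the paper only states as a conjecture. The paper also explicitly warns that the unitary of Corollary~\ref{cor:unitary-pf} may not correspond to a valid unitary process.

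The unique fixed-point property (Theorem~\ref{fixed-point}) concerns only classical interventions $h$. As quantum operations these have CJ operators diagonal in the computational basis. They therefore test only the decohered part of $W=\dketbra{U_\omega}{U_\omega}$, and validity of that part is just the classical content of Theorem~\ref{th:rev-pf}. Validity of the pure process must hold for arbitrary local CPTP maps (e.g.\ a local Hadamard), and these probe the coherences of $W$, i.e.\ its off-diagonal terms between different $\vec o,\tilde{\vec o}$.

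Linearity does not bridge this. The validity condition is linear in the CJ operator $M$, but diagonal CJ operators do not span those of general local channels. Moreover, for a non-permutation local unitary the composed map no longer sends basis states to basis states, so your bijectivity argument says nothing about it. For the same reason your ``main obstacle'' is misplaced: logical consistency of the extension with source and sink is already Theorem~\ref{th:rev-pf}, and the real difficulty is quantum.

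What is missing is the MECC-specific computation the paper carries out, adapting Appendix~B of \cite{AGB17}:
\begin{enumerate}
\item Write trace preservation of $N=M*W$ as $\sum_{\vec o,l}\Pi^T_{\omega(\vec o)}M_l^\dagger\ketbra{\vec o}{\vec o}M_l\Pi_{\omega(\vec o)}=\id$.
\item Split the sum over $\vec o$ into $B_0$, where $\Pi_{\omega(\vec o)}=\id$, and the blocks $B_{k',i_{k'}}$.
\item By MECC, on $B_{k',i_{k'}}$ the permutation is $\Pi_{i_{k'}}\otimes\id_{\backslash k'}$, acting on party $k'$ only. Meanwhile $\sum_{\vec o\in B_{k',i_{k'}}}\ketbra{\vec o}{\vec o}=\id_{k'}\otimes\ketbra{\vec c_{k',i_{k'}}}{\vec c_{k',i_{k'}}}$ leaves party $k'$'s output unconstrained.
\item Summing over party $k'$'s Kraus operators produces $\id_{I_{k'}}$, the permutation cancels, and one is left with $\sum_l M_l^\dagger M_l=\id$.
\end{enumerate}

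Step 1 also needs tightening. The parties prepare $\ket{\vec o}$, not $U\ket{\vec o}$. So for $\vec o\in B_{k,i_k}$ the measurement in $B$ returns a random element of $\{(U^{(k)}_{i_k}\otimes\id_{\backslash k})\ket{\vec o'}:\vec o'\in B_{k,i_k}\}$. Only the extracted label $i_k\vec e_k$ is deterministic, and this holds because MECC confines the support of $\ket{\vec o}$ to a single block. Eq.~\eqref{eq:unitaries-disjoint-local-bases} alone does not suffice. For the unambiguous non-MECC basis of Eq.~\eqref{eq:non-mecc_basis}, the input $\ket{001}$ yields $\vec i=(1,0,0)$ with probability $1/2$, although $\omega(0,0,1)=(1,1,0)$.
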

The proof of this theorem can be found in Appendix \ref{sec:appendix-a}. It is important to note that Corollary \ref{cor:local-unitaries} gives a necessary but insufficient condition for a successful extraction protocol. Nonetheless, as explained in the proof, the additional assumption that the PFU is MECC ensures the success of the first step of \hyperref[protocol]{Protocol}. It also ensures that the unitary operator constructed in the second step of \hyperref[protocol]{Protocol} is a valid process matrix. 

After numerically verifying the validity for a wide range of process functions without the MECC property, we conjecture that the purification step (the second step of \hyperref[protocol]{Protocol}) generalizes to all process functions, \textit{i.e.},
\begin{conjecture}
Any process function can be purified to a unitary process.
\end{conjecture}

\begin{figure*}[htb!]
    \centering
    \includegraphics[scale=0.4]{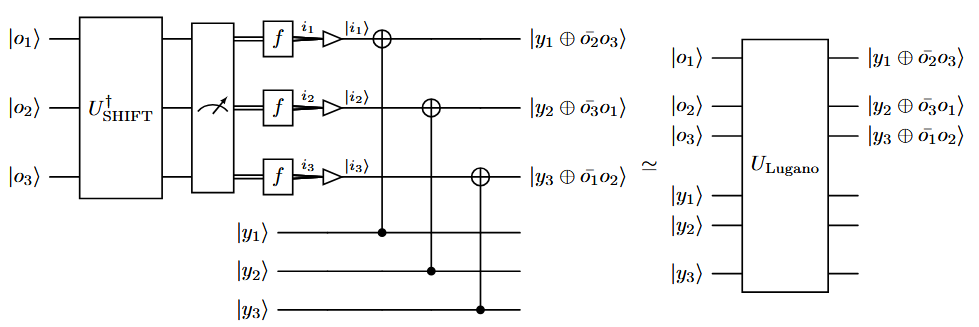}
    \caption{Purification of the AF/BW process function via the PFU of Figure \ref{fig:SHIFT-circuit} in the following steps: 1) Use the implementation of the SHIFT basis measurement as in the protocol of Figure \ref{fig:SHIFT-PFU}, 2) Prepare qubit states depending on the local observed outcomes, \textit{i.e.}, for outcomes $\{0,+\}$, we prepare the state $\ket{0}$ and for outcomes $\{1,-\}$, we prepare the state $\ket{1}$ (the box ``$f$" denotes the bit encoding $f : 0 \rightarrow 0, 1 \rightarrow 0, + \rightarrow 1, - \rightarrow 1$ and is followed by the encoding of this bit in the computational basis of a qubit), and 3) Use the standard purification of the AF/BW process function (\textit{i.e.}, by first making the function reversible and then constructing a unitary implementation), as outlined in Ref.~\cite{AFN17}.}
    \label{fig:unitarypurification}
\end{figure*}

\section{Discussion}\label{sec:conclusion}
We have introduced the quasi-process function representation of unitaries,  motivated by the profound ties between QNLWE ensembles and indefinite causal order \cite{KB23, DSL25}. We used this representation to investigate the structure of the unitaries which map the computational basis to QNLWE bases that can be measured using a process function \cite{KB23,DSL25}. In particular, we asked: Which unitaries are process function unitaries (PFUs)? 

Inspired by the unitary-based QNLWE construction of Niset and Cerf \cite{NC06}, we defined a class of unitaries, namely unitaries with mutually exclusive control conditions (MECC-Us). We showed that this property of the control conditions is sufficient for a unitary to be a process function unitary. We also reformulated the problem of constructing MECC unitaries using properties of generalized Hamming graphs, obtaining combinatorial insights into the circuit depth of such unitaries. 

We then considered unambiguous product bases \cite{DSL25}, another class of bases that are linked to process functions, and their corresponding unambiguous unitaries (Un-Us). We proved that unambiguity and MECC are two inequivalent properties. In fact, unambiguity concerns the local states (in the way they can be partitioned) while MECC is a property of the global states (how at most one party can be in a rotated basis). We constructed a simple example demonstrating that while both MECC and unambiguity are sufficient for encoding a process function, neither is necessary. Finally, we obtained a protocol to extract the AF/BW process function from the SHIFT unitary in order to then obtain a unitary purification of the process function as described in Ref.~\cite{AFN17}. We proved that this protocol can be generalized to any MECC unitary, thereby providing a systematic way to generate unitary processes.

These results offer several avenues for future work. It remains to find a condition that is both sufficient and necessary for a qPFU to be a PFU. It would also be interesting to investigate the relationship between MECC and unambiguity when restricted to non-causal process functions. Although this work is focused on deterministic examples of indefinite causal order, there also exist intrinsically probabilistic classical processes that are noncausal \cite{BFW14}. Ref.~\cite{A22} shows how such a process, the Baumeler-Feix-Wolf (BFW) process \cite{BFW14}, can be used to construct a separable operation that cannot be implemented by finite-round LOCC. A possible extension would therefore be to go beyond unitaries to study separable operations and their relationship to classical processes in general. 

We end with a final comment on the qPFU representation. The fact that all $n$-qubit unitaries are qPFUs suggests a possible new perspective on quantum computation in the circuit model. Quasi-process functions in this representation capture how the the control conditions across the qubits interact without fixing the local gates acting on the qubits. They offer an avenue for tools from the study of indefinite causal order to find applications in quantum information processing beyond the case of QNLWE \cite{KB23,DSL25}. A concrete example of how qPFUs could be useful beyond the application domain discussed here is ongoing work where we use their properties to investigate the structure of non-Clifford unitaries \cite{BK04}.

\section*{Acknowledgments}
The authors thank \"Amin Baumeler,  Cyril Branciard, Pablo Concha-Vega, Ognyan Oreshkov, Leonardo Vaglini, and Vilasini Venkatesh for insightful discussions. This work received support from the French government under the France 2030 investment plan, as part of the Initiative d'Excellence d'Aix-Marseille Université-A*MIDEX, AMX-22-CEI-01. NDA acknowledges financial support from the Program QuanTEdu-France
(Grant No. ANR-22-CMAS-0001 France 2030).
\bibliography{mainNotes}

\begin{thebibliography}{28}%
\makeatletter
\providecommand \@ifxundefined [1]{%
 \@ifx{#1\undefined}
}%
\providecommand \@ifnum [1]{%
 \ifnum #1\expandafter \@firstoftwo
 \else \expandafter \@secondoftwo
 \fi
}%
\providecommand \@ifx [1]{%
 \ifx #1\expandafter \@firstoftwo
 \else \expandafter \@secondoftwo
 \fi
}%
\providecommand \natexlab [1]{#1}%
\providecommand \enquote  [1]{``#1''}%
\providecommand \bibnamefont  [1]{#1}%
\providecommand \bibfnamefont [1]{#1}%
\providecommand \citenamefont [1]{#1}%
\providecommand \href@noop [0]{\@secondoftwo}%
\providecommand \href [0]{\begingroup \@sanitize@url \@href}%
\providecommand \@href[1]{\@@startlink{#1}\@@href}%
\providecommand \@@href[1]{\endgroup#1\@@endlink}%
\providecommand \@sanitize@url [0]{\catcode `\\12\catcode `\$12\catcode `\&12\catcode `\#12\catcode `\^12\catcode `\_12\catcode `\%12\relax}%
\providecommand \@@startlink[1]{}%
\providecommand \@@endlink[0]{}%
\providecommand \url  [0]{\begingroup\@sanitize@url \@url }%
\providecommand \@url [1]{\endgroup\@href {#1}{\urlprefix }}%
\providecommand \urlprefix  [0]{URL }%
\providecommand \Eprint [0]{\href }%
\providecommand \doibase [0]{https://doi.org/}%
\providecommand \selectlanguage [0]{\@gobble}%
\providecommand \bibinfo  [0]{\@secondoftwo}%
\providecommand \bibfield  [0]{\@secondoftwo}%
\providecommand \translation [1]{[#1]}%
\providecommand \BibitemOpen [0]{}%
\providecommand \bibitemStop [0]{}%
\providecommand \bibitemNoStop [0]{.\EOS\space}%
\providecommand \EOS [0]{\spacefactor3000\relax}%
\providecommand \BibitemShut  [1]{\csname bibitem#1\endcsname}%
\let\auto@bib@innerbib\@empty
\bibitem [{\citenamefont {Kunjwal}\ and\ \citenamefont {Oreshkov}(2023)}]{KO23}%
  \BibitemOpen
  \bibfield  {author} {\bibinfo {author} {\bibfnamefont {R.}~\bibnamefont {Kunjwal}}\ and\ \bibinfo {author} {\bibfnamefont {O.}~\bibnamefont {Oreshkov}},\ }\bibfield  {title} {\bibinfo {title} {Nonclassicality in correlations without causal order},\ }\href {https://arxiv.org/abs/2307.02565} {\bibfield  {journal} {\bibinfo  {journal} {arXiv preprint arXiv:2307.02565}\ } (\bibinfo {year} {2023})}\BibitemShut {NoStop}%
\bibitem [{\citenamefont {Bennett}\ \emph {et~al.}(1999{\natexlab{a}})\citenamefont {Bennett}, \citenamefont {DiVincenzo}, \citenamefont {Fuchs}, \citenamefont {Mor}, \citenamefont {Rains}, \citenamefont {Shor}, \citenamefont {Smolin},\ and\ \citenamefont {Wootters}}]{BDF99}%
  \BibitemOpen
  \bibfield  {author} {\bibinfo {author} {\bibfnamefont {C.~H.}\ \bibnamefont {Bennett}}, \bibinfo {author} {\bibfnamefont {D.~P.}\ \bibnamefont {DiVincenzo}}, \bibinfo {author} {\bibfnamefont {C.~A.}\ \bibnamefont {Fuchs}}, \bibinfo {author} {\bibfnamefont {T.}~\bibnamefont {Mor}}, \bibinfo {author} {\bibfnamefont {E.}~\bibnamefont {Rains}}, \bibinfo {author} {\bibfnamefont {P.~W.}\ \bibnamefont {Shor}}, \bibinfo {author} {\bibfnamefont {J.~A.}\ \bibnamefont {Smolin}},\ and\ \bibinfo {author} {\bibfnamefont {W.~K.}\ \bibnamefont {Wootters}},\ }\bibfield  {title} {\bibinfo {title} {Quantum nonlocality without entanglement},\ }\href {https://doi.org/10.1103/PhysRevA.59.1070} {\bibfield  {journal} {\bibinfo  {journal} {Phys. Rev. A}\ }\textbf {\bibinfo {volume} {59}},\ \bibinfo {pages} {1070} (\bibinfo {year} {1999}{\natexlab{a}})}\BibitemShut {NoStop}%
\bibitem [{\citenamefont {Baumeler}\ and\ \citenamefont {Wolf}(2016{\natexlab{a}})}]{BW16}%
  \BibitemOpen
  \bibfield  {author} {\bibinfo {author} {\bibfnamefont {{\"A}.}~\bibnamefont {Baumeler}}\ and\ \bibinfo {author} {\bibfnamefont {S.}~\bibnamefont {Wolf}},\ }\bibfield  {title} {\bibinfo {title} {The space of logically consistent classical processes without causal order},\ }\href {https://doi.org/10.1088/1367-2630/18/1/013036} {\bibfield  {journal} {\bibinfo  {journal} {New Journal of Physics}\ }\textbf {\bibinfo {volume} {18}},\ \bibinfo {pages} {013036} (\bibinfo {year} {2016}{\natexlab{a}})}\BibitemShut {NoStop}%
\bibitem [{\citenamefont {Kunjwal}\ and\ \citenamefont {Baumeler}(2023)}]{KB23}%
  \BibitemOpen
  \bibfield  {author} {\bibinfo {author} {\bibfnamefont {R.}~\bibnamefont {Kunjwal}}\ and\ \bibinfo {author} {\bibfnamefont {A.}~\bibnamefont {Baumeler}},\ }\bibfield  {title} {\bibinfo {title} {Trading causal order for locality},\ }\href {https://doi.org/10.1103/PhysRevLett.131.120201} {\bibfield  {journal} {\bibinfo  {journal} {Phys. Rev. Lett.}\ }\textbf {\bibinfo {volume} {131}},\ \bibinfo {pages} {120201} (\bibinfo {year} {2023})}\BibitemShut {NoStop}%
\bibitem [{\citenamefont {Dourdent}\ \emph {et~al.}(2025)\citenamefont {Dourdent}, \citenamefont {Simonov}, \citenamefont {Leitherer}, \citenamefont {Boghiu}, \citenamefont {Kunjwal}, \citenamefont {Halder}, \citenamefont {Augusiak},\ and\ \citenamefont {Ac{\'\i}n}}]{DSL25}%
  \BibitemOpen
  \bibfield  {author} {\bibinfo {author} {\bibfnamefont {H.}~\bibnamefont {Dourdent}}, \bibinfo {author} {\bibfnamefont {K.}~\bibnamefont {Simonov}}, \bibinfo {author} {\bibfnamefont {A.}~\bibnamefont {Leitherer}}, \bibinfo {author} {\bibfnamefont {E.-C.}\ \bibnamefont {Boghiu}}, \bibinfo {author} {\bibfnamefont {R.}~\bibnamefont {Kunjwal}}, \bibinfo {author} {\bibfnamefont {S.}~\bibnamefont {Halder}}, \bibinfo {author} {\bibfnamefont {R.}~\bibnamefont {Augusiak}},\ and\ \bibinfo {author} {\bibfnamefont {A.}~\bibnamefont {Ac{\'\i}n}},\ }\bibfield  {title} {\bibinfo {title} {Paradox-free classical non-causality and unambiguous non-locality without entanglement are equivalent},\ }\href@noop {} {\bibfield  {journal} {\bibinfo  {journal} {arXiv preprint arXiv:2512.23599}\ } (\bibinfo {year} {2025})}\BibitemShut {NoStop}%
\bibitem [{\citenamefont {Groisman}\ and\ \citenamefont {Vaidman}(2001)}]{GV01}%
  \BibitemOpen
  \bibfield  {author} {\bibinfo {author} {\bibfnamefont {B.}~\bibnamefont {Groisman}}\ and\ \bibinfo {author} {\bibfnamefont {L.}~\bibnamefont {Vaidman}},\ }\bibfield  {title} {\bibinfo {title} {Nonlocal variables with product-state eigenstates},\ }\href {https://doi.org/10.1088/0305-4470/34/35/313} {\bibfield  {journal} {\bibinfo  {journal} {Journal of Physics A: Mathematical and General}\ }\textbf {\bibinfo {volume} {34}},\ \bibinfo {pages} {6881–6889} (\bibinfo {year} {2001})}\BibitemShut {NoStop}%
\bibitem [{\citenamefont {Baumeler}\ and\ \citenamefont {Tselentis}(2021)}]{BT21}%
  \BibitemOpen
  \bibfield  {author} {\bibinfo {author} {\bibfnamefont {{\"{A}}.}~\bibnamefont {Baumeler}}\ and\ \bibinfo {author} {\bibfnamefont {E.}~\bibnamefont {Tselentis}},\ }\bibfield  {title} {\bibinfo {title} {{Equivalence of Grandfather and Information Antinomy Under Intervention}},\ }\href {https://doi.org/10.4204/EPTCS.340.1} {\bibfield  {journal} {\bibinfo  {journal} {Electronic Proceedings in Theoretical Computer Science}\ }\textbf {\bibinfo {volume} {340}},\ \bibinfo {pages} {1} (\bibinfo {year} {2021})}\BibitemShut {NoStop}%
\bibitem [{\citenamefont {Niset}\ and\ \citenamefont {Cerf}(2006)}]{NC06}%
  \BibitemOpen
  \bibfield  {author} {\bibinfo {author} {\bibfnamefont {J.}~\bibnamefont {Niset}}\ and\ \bibinfo {author} {\bibfnamefont {N.~J.}\ \bibnamefont {Cerf}},\ }\bibfield  {title} {\bibinfo {title} {Multipartite nonlocality without entanglement in many dimensions},\ }\href {https://doi.org/10.1103/PhysRevA.74.052103} {\bibfield  {journal} {\bibinfo  {journal} {Phys. Rev. A}\ }\textbf {\bibinfo {volume} {74}},\ \bibinfo {pages} {052103} (\bibinfo {year} {2006})}\BibitemShut {NoStop}%
\bibitem [{\citenamefont {Oreshkov}\ \emph {et~al.}(2012)\citenamefont {Oreshkov}, \citenamefont {Costa},\ and\ \citenamefont {Brukner}}]{OCB12}%
  \BibitemOpen
  \bibfield  {author} {\bibinfo {author} {\bibfnamefont {O.}~\bibnamefont {Oreshkov}}, \bibinfo {author} {\bibfnamefont {F.}~\bibnamefont {Costa}},\ and\ \bibinfo {author} {\bibfnamefont {{\v{C}}.}~\bibnamefont {Brukner}},\ }\bibfield  {title} {\bibinfo {title} {Quantum correlations with no causal order},\ }\href {https://doi.org/10.1038/ncomms2076} {\bibfield  {journal} {\bibinfo  {journal} {Nature communications}\ }\textbf {\bibinfo {volume} {3}},\ \bibinfo {pages} {1092} (\bibinfo {year} {2012})}\BibitemShut {NoStop}%
\bibitem [{\citenamefont {Ara{\'{u}}jo}\ \emph {et~al.}(2017)\citenamefont {Ara{\'{u}}jo}, \citenamefont {Feix}, \citenamefont {Navascu{\'{e}}s},\ and\ \citenamefont {Brukner}}]{AFN17}%
  \BibitemOpen
  \bibfield  {author} {\bibinfo {author} {\bibfnamefont {M.}~\bibnamefont {Ara{\'{u}}jo}}, \bibinfo {author} {\bibfnamefont {A.}~\bibnamefont {Feix}}, \bibinfo {author} {\bibfnamefont {M.}~\bibnamefont {Navascu{\'{e}}s}},\ and\ \bibinfo {author} {\bibfnamefont {{\v{C}}.}~\bibnamefont {Brukner}},\ }\bibfield  {title} {\bibinfo {title} {A purification postulate for quantum mechanics with indefinite causal order},\ }\href {https://doi.org/10.22331/q-2017-04-26-10} {\bibfield  {journal} {\bibinfo  {journal} {{Quantum}}\ }\textbf {\bibinfo {volume} {1}},\ \bibinfo {pages} {10} (\bibinfo {year} {2017})}\BibitemShut {NoStop}%
\bibitem [{\citenamefont {Nielsen}\ and\ \citenamefont {Chuang}(2010)}]{NC10}%
  \BibitemOpen
  \bibfield  {author} {\bibinfo {author} {\bibfnamefont {M.~A.}\ \bibnamefont {Nielsen}}\ and\ \bibinfo {author} {\bibfnamefont {I.~L.}\ \bibnamefont {Chuang}},\ }\href@noop {} {\emph {\bibinfo {title} {Quantum Computation and Quantum Information: 10th Anniversary Edition}}}\ (\bibinfo  {publisher} {Cambridge University Press},\ \bibinfo {year} {2010})\BibitemShut {NoStop}%
\bibitem [{\citenamefont {Barenco}\ \emph {et~al.}(1995)\citenamefont {Barenco}, \citenamefont {Bennett}, \citenamefont {Cleve}, \citenamefont {DiVincenzo}, \citenamefont {Margolus}, \citenamefont {Shor}, \citenamefont {Sleator}, \citenamefont {Smolin},\ and\ \citenamefont {Weinfurter}}]{BBC95}%
  \BibitemOpen
  \bibfield  {author} {\bibinfo {author} {\bibfnamefont {A.}~\bibnamefont {Barenco}}, \bibinfo {author} {\bibfnamefont {C.~H.}\ \bibnamefont {Bennett}}, \bibinfo {author} {\bibfnamefont {R.}~\bibnamefont {Cleve}}, \bibinfo {author} {\bibfnamefont {D.~P.}\ \bibnamefont {DiVincenzo}}, \bibinfo {author} {\bibfnamefont {N.}~\bibnamefont {Margolus}}, \bibinfo {author} {\bibfnamefont {P.}~\bibnamefont {Shor}}, \bibinfo {author} {\bibfnamefont {T.}~\bibnamefont {Sleator}}, \bibinfo {author} {\bibfnamefont {J.~A.}\ \bibnamefont {Smolin}},\ and\ \bibinfo {author} {\bibfnamefont {H.}~\bibnamefont {Weinfurter}},\ }\bibfield  {title} {\bibinfo {title} {Elementary gates for quantum computation},\ }\href {https://doi.org/10.1103/physreva.52.3457} {\bibfield  {journal} {\bibinfo  {journal} {Physical Review A}\ }\textbf {\bibinfo {volume} {52}},\ \bibinfo {pages} {3457–3467} (\bibinfo {year} {1995})}\BibitemShut {NoStop}%
\bibitem [{\citenamefont {Baumeler}\ and\ \citenamefont {Wolf}(2016{\natexlab{b}})}]{BW16fp}%
  \BibitemOpen
  \bibfield  {author} {\bibinfo {author} {\bibfnamefont {{\"{A}}.}~\bibnamefont {Baumeler}}\ and\ \bibinfo {author} {\bibfnamefont {S.}~\bibnamefont {Wolf}},\ }\bibfield  {title} {\bibinfo {title} {{Device-independent test of causal order and relations to fixed-points}},\ }\href {https://doi.org/10.1088/1367-2630/18/3/035014} {\bibfield  {journal} {\bibinfo  {journal} {New Journal of Physics}\ }\textbf {\bibinfo {volume} {18}},\ \bibinfo {pages} {035014} (\bibinfo {year} {2016}{\natexlab{b}})}\BibitemShut {NoStop}%
\bibitem [{\citenamefont {DiVincenzo}\ \emph {et~al.}(2002)\citenamefont {DiVincenzo}, \citenamefont {Leung},\ and\ \citenamefont {Terhal}}]{DPL02}%
  \BibitemOpen
  \bibfield  {author} {\bibinfo {author} {\bibfnamefont {D.}~\bibnamefont {DiVincenzo}}, \bibinfo {author} {\bibfnamefont {D.}~\bibnamefont {Leung}},\ and\ \bibinfo {author} {\bibfnamefont {B.}~\bibnamefont {Terhal}},\ }\bibfield  {title} {\bibinfo {title} {Quantum data hiding},\ }\href {https://doi.org/10.1109/18.985948} {\bibfield  {journal} {\bibinfo  {journal} {IEEE Transactions on Information Theory}\ }\textbf {\bibinfo {volume} {48}},\ \bibinfo {pages} {580–598} (\bibinfo {year} {2002})}\BibitemShut {NoStop}%
\bibitem [{\citenamefont {Halder}\ \emph {et~al.}(2019)\citenamefont {Halder}, \citenamefont {Banik}, \citenamefont {Agrawal},\ and\ \citenamefont {Bandyopadhyay}}]{HBA19}%
  \BibitemOpen
  \bibfield  {author} {\bibinfo {author} {\bibfnamefont {S.}~\bibnamefont {Halder}}, \bibinfo {author} {\bibfnamefont {M.}~\bibnamefont {Banik}}, \bibinfo {author} {\bibfnamefont {S.}~\bibnamefont {Agrawal}},\ and\ \bibinfo {author} {\bibfnamefont {S.}~\bibnamefont {Bandyopadhyay}},\ }\bibfield  {title} {\bibinfo {title} {Strong quantum nonlocality without entanglement},\ }\bibfield  {journal} {\bibinfo  {journal} {Physical Review Letters}\ }\textbf {\bibinfo {volume} {122}},\ \href {https://doi.org/10.1103/physrevlett.122.040403} {10.1103/physrevlett.122.040403} (\bibinfo {year} {2019})\BibitemShut {NoStop}%
\bibitem [{\citenamefont {Bandyopadhyay}(2010)}]{B10}%
  \BibitemOpen
  \bibfield  {author} {\bibinfo {author} {\bibfnamefont {S.}~\bibnamefont {Bandyopadhyay}},\ }\bibfield  {title} {\bibinfo {title} {Entanglement and perfect discrimination of a class of multiqubit states by local operations and classical communication},\ }\href {https://doi.org/10.1103/PhysRevA.81.022327} {\bibfield  {journal} {\bibinfo  {journal} {Phys. Rev. A}\ }\textbf {\bibinfo {volume} {81}},\ \bibinfo {pages} {022327} (\bibinfo {year} {2010})}\BibitemShut {NoStop}%
\bibitem [{\citenamefont {Bandyopadhyay}(2011)}]{B11}%
  \BibitemOpen
  \bibfield  {author} {\bibinfo {author} {\bibfnamefont {S.}~\bibnamefont {Bandyopadhyay}},\ }\bibfield  {title} {\bibinfo {title} {More nonlocality with less purity},\ }\bibfield  {journal} {\bibinfo  {journal} {Physical Review Letters}\ }\textbf {\bibinfo {volume} {106}},\ \href {https://doi.org/10.1103/physrevlett.106.210402} {10.1103/physrevlett.106.210402} (\bibinfo {year} {2011})\BibitemShut {NoStop}%
\bibitem [{\citenamefont {Halder}\ and\ \citenamefont {Augusiak}(2025)}]{HR25}%
  \BibitemOpen
  \bibfield  {author} {\bibinfo {author} {\bibfnamefont {S.}~\bibnamefont {Halder}}\ and\ \bibinfo {author} {\bibfnamefont {R.}~\bibnamefont {Augusiak}},\ }\href {https://arxiv.org/abs/2503.05673} {\bibinfo {title} {Local unambiguous unidentifiability, entanglement generation, and hilbert space splitting}} (\bibinfo {year} {2025}),\ \Eprint {https://arxiv.org/abs/2503.05673} {arXiv:2503.05673 [quant-ph]} \BibitemShut {NoStop}%
\bibitem [{\citenamefont {Bennett}\ \emph {et~al.}(1999{\natexlab{b}})\citenamefont {Bennett}, \citenamefont {DiVincenzo}, \citenamefont {Mor}, \citenamefont {Shor}, \citenamefont {Smolin},\ and\ \citenamefont {Terhal}}]{BDP99UPB}%
  \BibitemOpen
  \bibfield  {author} {\bibinfo {author} {\bibfnamefont {C.~H.}\ \bibnamefont {Bennett}}, \bibinfo {author} {\bibfnamefont {D.~P.}\ \bibnamefont {DiVincenzo}}, \bibinfo {author} {\bibfnamefont {T.}~\bibnamefont {Mor}}, \bibinfo {author} {\bibfnamefont {P.~W.}\ \bibnamefont {Shor}}, \bibinfo {author} {\bibfnamefont {J.~A.}\ \bibnamefont {Smolin}},\ and\ \bibinfo {author} {\bibfnamefont {B.~M.}\ \bibnamefont {Terhal}},\ }\bibfield  {title} {\bibinfo {title} {Unextendible product bases and bound entanglement},\ }\href {https://doi.org/10.1103/physrevlett.82.5385} {\bibfield  {journal} {\bibinfo  {journal} {Physical Review Letters}\ }\textbf {\bibinfo {volume} {82}},\ \bibinfo {pages} {5385–5388} (\bibinfo {year} {1999}{\natexlab{b}})}\BibitemShut {NoStop}%
\bibitem [{\citenamefont {DiVincenzo}\ \emph {et~al.}(2003)\citenamefont {DiVincenzo}, \citenamefont {Mor}, \citenamefont {Shor}, \citenamefont {Smolin},\ and\ \citenamefont {Terhal}}]{DPM03UPB}%
  \BibitemOpen
  \bibfield  {author} {\bibinfo {author} {\bibfnamefont {D.~P.}\ \bibnamefont {DiVincenzo}}, \bibinfo {author} {\bibfnamefont {T.}~\bibnamefont {Mor}}, \bibinfo {author} {\bibfnamefont {P.~W.}\ \bibnamefont {Shor}}, \bibinfo {author} {\bibfnamefont {J.~A.}\ \bibnamefont {Smolin}},\ and\ \bibinfo {author} {\bibfnamefont {B.~M.}\ \bibnamefont {Terhal}},\ }\bibfield  {title} {\bibinfo {title} {Unextendible product bases, uncompletable product bases and bound entanglement},\ }\href {https://doi.org/10.1007/s00220-003-0877-6} {\bibfield  {journal} {\bibinfo  {journal} {Communications in Mathematical Physics}\ }\textbf {\bibinfo {volume} {238}},\ \bibinfo {pages} {379–410} (\bibinfo {year} {2003})}\BibitemShut {NoStop}%
\bibitem [{\citenamefont {Ara\'ujo}\ \emph {et~al.}(2017)\citenamefont {Ara\'ujo}, \citenamefont {Gu\'erin},\ and\ \citenamefont {Baumeler}}]{AGB17}%
  \BibitemOpen
  \bibfield  {author} {\bibinfo {author} {\bibfnamefont {M.}~\bibnamefont {Ara\'ujo}}, \bibinfo {author} {\bibfnamefont {P.~A.}\ \bibnamefont {Gu\'erin}},\ and\ \bibinfo {author} {\bibfnamefont {A.}~\bibnamefont {Baumeler}},\ }\bibfield  {title} {\bibinfo {title} {Quantum computation with indefinite causal structures},\ }\href {https://doi.org/10.1103/PhysRevA.96.052315} {\bibfield  {journal} {\bibinfo  {journal} {Phys. Rev. A}\ }\textbf {\bibinfo {volume} {96}},\ \bibinfo {pages} {052315} (\bibinfo {year} {2017})}\BibitemShut {NoStop}%
\bibitem [{\citenamefont {Baumeler}\ \emph {et~al.}(2022)\citenamefont {Baumeler}, \citenamefont {Gilani},\ and\ \citenamefont {Rashid}}]{BGR22}%
  \BibitemOpen
  \bibfield  {author} {\bibinfo {author} {\bibfnamefont {A.}~\bibnamefont {Baumeler}}, \bibinfo {author} {\bibfnamefont {A.~S.}\ \bibnamefont {Gilani}},\ and\ \bibinfo {author} {\bibfnamefont {J.}~\bibnamefont {Rashid}},\ }\bibfield  {title} {\bibinfo {title} {Unlimited non-causal correlations and their relation to non-locality},\ }\href {https://doi.org/10.22331/q-2022-03-29-673} {\bibfield  {journal} {\bibinfo  {journal} {Quantum}\ }\textbf {\bibinfo {volume} {6}},\ \bibinfo {pages} {673} (\bibinfo {year} {2022})}\BibitemShut {NoStop}%
\bibitem [{\citenamefont {Erde}(2024)}]{matchings}%
  \BibitemOpen
  \bibfield  {author} {\bibinfo {author} {\bibfnamefont {J.}~\bibnamefont {Erde}},\ }\href {https://arxiv.org/abs/2404.03950} {\bibinfo {title} {Matchings in the hypercube with specified edges}} (\bibinfo {year} {2024}),\ \Eprint {https://arxiv.org/abs/2404.03950} {arXiv:2404.03950 [math.CO]} \BibitemShut {NoStop}%
\bibitem [{\citenamefont {Baumeler}(2017)}]{B17}%
  \BibitemOpen
  \bibfield  {author} {\bibinfo {author} {\bibfnamefont {A.}~\bibnamefont {Baumeler}},\ }\bibfield  {title} {\bibinfo {title} {Causal loops: logically consistent correlations, time travel, and computation},\ }\href {https://cqi.inf.usi.ch/publications/these_amin.pdf} {\bibfield  {journal} {\bibinfo  {journal} {PhD thesis, Università della Svizzera italiana}\ } (\bibinfo {year} {2017})}\BibitemShut {NoStop}%
\bibitem [{\citenamefont {Baumeler}\ \emph {et~al.}(2014)\citenamefont {Baumeler}, \citenamefont {Feix},\ and\ \citenamefont {Wolf}}]{BFW14}%
  \BibitemOpen
  \bibfield  {author} {\bibinfo {author} {\bibfnamefont {A.}~\bibnamefont {Baumeler}}, \bibinfo {author} {\bibfnamefont {A.}~\bibnamefont {Feix}},\ and\ \bibinfo {author} {\bibfnamefont {S.}~\bibnamefont {Wolf}},\ }\bibfield  {title} {\bibinfo {title} {Maximal incompatibility of locally classical behavior and global causal order in multiparty scenarios},\ }\href {https://doi.org/10.1103/PhysRevA.90.042106} {\bibfield  {journal} {\bibinfo  {journal} {Phys. Rev. A}\ }\textbf {\bibinfo {volume} {90}},\ \bibinfo {pages} {042106} (\bibinfo {year} {2014})}\BibitemShut {NoStop}%
\bibitem [{\citenamefont {Akibue}(2022)}]{A22}%
  \BibitemOpen
  \bibfield  {author} {\bibinfo {author} {\bibfnamefont {S.}~\bibnamefont {Akibue}},\ }\href {https://arxiv.org/abs/2202.06518} {\bibinfo {title} {Entanglement and causal relation in distributed quantum computation}} (\bibinfo {year} {2022}),\ \Eprint {https://arxiv.org/abs/2202.06518} {arXiv:2202.06518 [quant-ph]} \BibitemShut {NoStop}%
\bibitem [{\citenamefont {Bravyi}\ and\ \citenamefont {Kitaev}(2005)}]{BK04}%
  \BibitemOpen
  \bibfield  {author} {\bibinfo {author} {\bibfnamefont {S.}~\bibnamefont {Bravyi}}\ and\ \bibinfo {author} {\bibfnamefont {A.}~\bibnamefont {Kitaev}},\ }\bibfield  {title} {\bibinfo {title} {Universal quantum computation with ideal clifford gates and noisy ancillas},\ }\bibfield  {journal} {\bibinfo  {journal} {Physical Review A}\ }\textbf {\bibinfo {volume} {71}},\ \href {https://doi.org/10.1103/physreva.71.022316} {10.1103/physreva.71.022316} (\bibinfo {year} {2005})\BibitemShut {NoStop}%
\bibitem [{\citenamefont {Chiribella}\ \emph {et~al.}(2009)\citenamefont {Chiribella}, \citenamefont {D’Ariano},\ and\ \citenamefont {Perinotti}}]{CDP09}%
  \BibitemOpen
  \bibfield  {author} {\bibinfo {author} {\bibfnamefont {G.}~\bibnamefont {Chiribella}}, \bibinfo {author} {\bibfnamefont {G.~M.}\ \bibnamefont {D’Ariano}},\ and\ \bibinfo {author} {\bibfnamefont {P.}~\bibnamefont {Perinotti}},\ }\bibfield  {title} {\bibinfo {title} {Theoretical framework for quantum networks},\ }\bibfield  {journal} {\bibinfo  {journal} {Physical Review A}\ }\textbf {\bibinfo {volume} {80}},\ \href {https://doi.org/10.1103/physreva.80.022339} {10.1103/physreva.80.022339} (\bibinfo {year} {2009})\BibitemShut {NoStop}%
\end{thebibliography}%
\newpage
\appendix
\onecolumngrid
\section{Proof of Theorem VII.1}\label{sec:appendix-a}
We present two results needed for the proof of Theorem \ref{th:purification}. The first one, proved in Section 4.7.1 of Ref.~\cite{B17}, is that any process function can be turned into a reversible process function. The second, that any process function can be mapped to a unitary operator, is then obtained as corollary. 
\begin{theorem}[Reversible process function, adapted from Theorem 7 in Section 4.7.1 of Ref.~\cite{B17}]\label{th:rev-pf}
Any n-partite process function $\omega : \vec{O}\rightarrow \vec{I}$ can be turned into a reversible function $\omega_r : \vec{O}\times O_{src} \rightarrow \vec{I} \times I_{sink}$ defined as $\omega_r(\vec{o}, \vec{s}) = (\vec{s} \oplus_{\vec{I}} \omega(\vec{o}),\vec{o})$, where $\oplus_{\vec{I}}$ is the addition applied locally modulo $|I_k| ~\forall k \in [n]$ and we have additional systems, a source system with output space $O_{src}  = \vec{I}$ and trivial input space, and a sink system with input space $I_{sink} = \vec{O}$ and trivial output space.
\end{theorem}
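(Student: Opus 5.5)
The statement has two parts: $\omega_r$ is a bijection, and $\omega_r$ is a valid $(n+2)$-partite process function. I would prove them separately, with the second reduced directly to the unique fixed-point condition of Theorem \ref{fixed-point} for the original $\omega$.

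\emph{Reversibility.} Both sides have the same size, since $|\vec{O}\times O_{src}| = |\vec{O}||\vec{I}| = |\vec{I}\times I_{sink}|$. It therefore suffices to exhibit a left inverse. Given $(\vec{i},\vec{o}')\in\vec{I}\times\vec{O}$, I set
\begin{equation*}
\omega_r^{-1}(\vec{i},\vec{o}') = \big(\vec{o}',\ \vec{i}\ominus_{\vec{I}}\omega(\vec{o}')\big),
\end{equation*}
where $\ominus_{\vec{I}}$ is componentwise subtraction modulo $|I_k|$. Composing with $\omega_r$ gives the identity because $\oplus_{\vec{I}}$ is a group operation on $\vec{I}=\bigtimes_k \mathbb{Z}_{|I_k|}$. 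This step is routine.

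\emph{Validity as a process function.} I would apply Theorem \ref{fixed-point} to the $n+2$ parties.
\begin{enumerate}
\item Local interventions for the new parties are trivial to describe. The source has trivial input, so its intervention is just a choice of constant output $\vec{s}\in\vec{I}$. The sink has trivial output, so its intervention carries no information.
\item For the original parties, a choice of interventions is a tuple $h=(h_k)_k$.
\item The fixed-point equation for the full system then reads $\vec{i}=\vec{s}\oplus_{\vec{I}}\omega(h(\vec{i}))$, together with $i_{sink}=h(\vec{i})$. The sink component is thus determined once $\vec{i}$ is.
\item So it suffices to show that for every $h$ and $\vec{s}$ there is a unique $\vec{i}$ with $\vec{i}=\vec{s}\oplus\omega(h(\vec{i}))$.
\end{enumerate}

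This is the only step needing an idea, and the idea is a change of variables. Let $\vec{j}=\vec{i}\ominus\vec{s}$ and define new local interventions $h'_k(j_k):=h_k(j_k\oplus s_k)$. These are still local, because $s_k$ is a fixed constant for party $k$. The equation then becomes $\vec{j}=\omega(h'(\vec{j}))$. By Theorem \ref{fixed-point} applied to $\omega$, this has exactly one solution. Since $\vec{j}\mapsto\vec{j}\oplus\vec{s}$ is a bijection, the original equation has exactly one solution $\vec{i}$. Hence $\omega_r$ satisfies the unique fixed-point condition and is a process function.

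The only subtlety I anticipate is checking that the shift by the source output preserves locality of the interventions. This holds because $\oplus_{\vec{I}}$ acts componentwise, so each $h'_k$ depends only on $j_k$.
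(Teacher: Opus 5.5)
Your proof is correct. The paper does not prove this statement itself: it imports it from Section~4.7.1 of Ref.~\cite{B17}. It uses the result only in Corollary~\ref{cor:unitary-pf}, and the only property invoked there is that $\omega_r$ is a bijection, which makes $U_\omega$ a permutation of basis states and hence unitary.

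Your argument therefore supplies what the paper leaves to the citation, and it is self-contained, relying only on Theorem~\ref{fixed-point}, which the paper already states. The substitution $\vec{j}=\vec{i}\ominus_{\vec{I}}\vec{s}$, $h'_k(j_k)=h_k(j_k\oplus s_k)$ absorbs the source's constant output into a new set of local interventions. This makes clear that nothing beyond $\omega$ being a process function is needed for $\omega_r$ to be a reversible process function; in particular, no MECC structure is required. Your treatment of the trivial source input and of the sink component of the fixed point is also right.

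One point of context: the classical fixed-point property of $\omega_r$ that you establish is not what the paper relies on later. When it claims that the process built from $U_\omega$ is a valid quantum process, that is a separate and stronger statement. The paper checks it directly via the link-product condition, and only under the MECC assumption, leaving the general case as a conjecture.
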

It was already proven that the AF/BW process can be mapped into a unitary operator (see Ref.~\cite{AFN17} and Section 4.8.2 of Ref.~\cite{B17}). We generalize this result to any process function as corollary of the previous theorem. 
\begin{corollary}\label{cor:unitary-pf}
Any process function can be turned into a unitary operator.
\end{corollary}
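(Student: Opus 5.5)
The plan is to combine Theorem \ref{th:rev-pf} with the standard fact that every bijection between finite sets of equal size defines a permutation matrix, which is unitary. Start from an arbitrary $n$-partite process function $\omega:\vec{O}\rightarrow\vec{I}$. By Theorem \ref{th:rev-pf}, extend it to $\omega_r:\vec{O}\times O_{src}\rightarrow \vec{I}\times I_{sink}$, with $\omega_r(\vec{o},\vec{s})=(\vec{s}\oplus_{\vec{I}}\omega(\vec{o}),\vec{o})$, where $O_{src}=\vec{I}$ and $I_{sink}=\vec{O}$. The domain and codomain then both have cardinality $|\vec{O}|\,|\vec{I}|$. This fact is what allows $\omega_r$ to be a square permutation matrix.

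The first step is to verify that $\omega_r$ is a bijection. Since the sets are finite and of equal size, injectivity suffices. Suppose $\omega_r(\vec{o},\vec{s})=\omega_r(\vec{o}',\vec{s}')$. Comparing the second component gives $\vec{o}=\vec{o}'$, so $\omega(\vec{o})=\omega(\vec{o}')$. Comparing the first component then gives $\vec{s}\oplus_{\vec{I}}\omega(\vec{o})=\vec{s}'\oplus_{\vec{I}}\omega(\vec{o})$. Componentwise addition modulo $|I_k|$ is a group operation on each $I_k$, so cancellation yields $\vec{s}=\vec{s}'$.

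The second step is to define the operator on $\mathcal{H}_{\vec{O}}\otimes\mathcal{H}_{O_{src}}$, with output in $\mathcal{H}_{\vec{I}}\otimes\mathcal{H}_{I_{sink}}$:
\begin{equation*}
V_\omega:=\sum_{\vec{o}\in\vec{O}}\sum_{\vec{s}\in\vec{I}}\ketbra{\vec{s}\oplus_{\vec{I}}\omega(\vec{o}),\,\vec{o}}{\vec{o},\,\vec{s}}.
\end{equation*}
Because $\omega_r$ is a bijection of orthonormal basis labels, $V_\omega$ maps an orthonormal basis onto an orthonormal basis. Hence $V_\omega^\dagger V_\omega=\id$ and $V_\omega V_\omega^\dagger=\id$, so $V_\omega$ is unitary. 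One can identify the input and output spaces by a fixed isometric labelling, since their dimensions agree; this gives a genuine unitary on a single space if desired. By construction, $V_\omega$ acts on computational basis states exactly as $\omega_r$ does. In particular, preparing the source in $\ket{\vec{0}}$ gives $V_\omega\ket{\vec{o},\vec{0}}=\ket{\omega(\vec{o}),\vec{o}}$, so the unitary encodes the original process function.

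The main obstacle is conceptual rather than technical. The corollary only asserts the existence of a unitary operator, not that $V_\omega$ is a valid unitary process in the process-matrix sense. The proof should therefore state this limitation explicitly, as the text preceding the statement does, and defer validity to the MECC-specific argument of Theorem \ref{th:purification}. Apart from this caveat, the only point needing care is that $\oplus_{\vec{I}}$ is invertible. That holds because each $I_k$ carries the cyclic group structure $\mathbb{Z}_{|I_k|}$.
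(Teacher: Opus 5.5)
Your proposal is correct and follows the paper's proof. You invoke Theorem \ref{th:rev-pf} to get $\omega_r(\vec{o},\vec{s})=(\vec{s}\oplus_{\vec{I}}\omega(\vec{o}),\vec{o})$ and encode it as the same basis-permuting operator $\sum_{\vec{o},\vec{s}}\ketbra{\vec{s}\oplus_{\vec{I}}\omega(\vec{o})}{\vec{o}}\otimes\ketbra{\vec{o}}{\vec{s}}$. The paper takes bijectivity of $\omega_r$ directly from Theorem \ref{th:rev-pf}, whereas you check it explicitly via the second component and cancellation in $\mathbb{Z}_{|I_k|}$.
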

\begin{proof}
By Theorem \ref{th:rev-pf}, any process function $\omega$ can be mapped into a reversible process function $\omega_r : (\vec{o}, \vec{s}) \mapsto (\vec{s} \oplus_{\vec{I}} \omega(\vec{o}),\vec{o})$. It suffices then to define the following operator from the spaces $\vec{O}\times O_{src}$ to the spaces $\vec{I}\times I_{sink}$, encoding the action of $\omega_r$: 
\begin{equation}\label{eq:unitary-pf}
U_\omega = \sum\limits_{\vec{o}, \vec{s}} \ketbra{\vec{s} \oplus_{\vec{I}} \omega(\vec{o})}{\vec{o}} \otimes \ketbra{\vec{o}}{\vec{s}},
\end{equation}
which is unitary because, by bijectivity of the map $(\vec{o}, \vec{s}) \mapsto (\vec{s} \oplus_{\vec{I}} \omega(\vec{o}),\vec{o})$, each state $\ket{\vec{s} \oplus_{\vec{I}} \omega(\vec{o})}\otimes\ket{\vec{o}}$ is obtained from the application of $U_\omega$ to a unique state  $\ket{\vec{o}}\otimes\ket{\vec{s}}$, essentially permuting the states of the $|\vec{O}|\cdot|\vec{I}|$-dimensional computational basis.
\end{proof}
We can now prove Theorem \ref{th:purification}.
\begin{proof}  
We begin to show that the first step of the protocol allows us to extract the process function. We then show that the standard unitary purification of this process function generates a valid unitary process.
\begin{enumerate}
    \item Extraction of the process function:
    Consider the basis $B =\{\bigotimes\limits_{k = 1}^n U^{(k)}_{i_k = \omega_k(\vec{o}_{\backslash k})}\ket{\vec{o}} | \vec{o} \in \vec{O}\}$ obtained from the application of the PFU to the computational basis. By Corollary \ref{cor:local-unitaries}, the labels define valid local functions $f_k$. Additionally, since $B$ is generated by a PFU, the labels correspond to a process function. Indeed, each vector will be assigned a label $(\vec{o}|\vec{i})$ such that $\vec{i} = \omega(\vec{o})$, with $\omega$ being the process function in the PFU decomposition. We now show how the function $f = (f_k)_{k \in [n]}: \vec{L} \rightarrow \vec{I}$ correctly maps to the right $\vec{i} = \omega(\vec{o})$ for every input state $\ket{\vec{o}}$ to the extraction circuit (\textit{e.g.}, in Figure \ref{fig:unitarypurification}). 
    
    If $\ket{\vec{o}}$ is in $B$, the measurement device deterministically outputs the global classical description $\vec{o}$, to which the function $f$ correctly assigns the global basis label $\vec{0} = \omega(\vec{o})$ (remember that $i_k = 0$ corresponds to the identity). If $\ket{\vec{o}}$ is not in $B$, the device acts probabilistically, with the possible measurement outcomes being the states in $B$ that are non-orthogonal to $\ket{\vec{o}}$. Since the control conditions are mutually exclusive, the string $\vec{o}=(o_k,\vec{o}_{\backslash k})$ corresponds to a unique $i_k$ (the one labelling the subset $B_{k,i_k}\subseteq\vec{O}$ in Definition \ref{def:MECC-U}) for some $k$. Therefore, the measurement outcomes non-orthogonal to $\ket{\vec{o}}$ are necessarily the states in the subset \{$(U^{(k)}_{i_k} \otimes \id_{\backslash k})\ket{\vec{o}}|\vec{o}\in B_{k,i_k}$\}, which will be mapped by $f$ to the same global basis label $i_k\vec{e}_k = \omega(\vec{o})$.

    \item Purification of the process function:
    
By Theorem \ref{th:rev-pf}, the extracted process function can be transformed into its reversible version $\omega_r : (\vec{o}, \vec{s}) \mapsto (\vec{s} \oplus_{\vec{I}} \omega(\vec{o}),\vec{o})$. It remains to show that the Choi–Jamiołkowski (CJ) operator $W = \dketbra{U_\omega}{U_\omega}$ of the unitary $U_\omega$ defined in Eq~\ref{eq:unitary-pf} is a valid process, where 
\begin{equation}
 \dket{U_\omega} = \sum\limits_{\vec{o},\vec{s}} \ket{\vec{o}}_{\vec{O}}\otimes\ket{\vec{s}}_{O_{src}}\otimes\ket{\vec{s}\oplus_{\vec{I}} \omega(\vec{o})}_{\vec{I}}\otimes\ket{\vec{o}}_{I_{sink}}.   
\end{equation}
A process is valid if it maps any local CPTP map $\mathcal{M} = \bigotimes\limits_{k=1}^n \mathcal{M}^k$ from $\vec{I}$ to $\vec{O}$ to valid a CPTP map $\mathcal{N}$ from $O_{src}$ to $I_{sink}$. Reformulating the validity of $W$ in terms of $N$, the CJ operator of the map $\mathcal{N}$, we have that: 
\begin{align}
 N &= \Tr_{\vec{I},\vec{O}}[W(\bigotimes_{k=1}^n {M^k}^T\otimes \id_{O_{src}I_{sink}})] \nonumber  \\ 
 &= M * W \ \text{s.t.} \\ 
 N  &\geq 0, \Tr_{I_{sink}}[N] =  \id_{O_{src}}, \label{eq:valid}
\end{align}
where * is the link product as defined in \cite{CDP09}) and $M^k$ are the CJ operators of the parties's local interventions $\mathcal{M}^k$. 

In Ref.~\cite{AGB17}, M. Araújo et al. define an $n$-partite generalization of the AF/BW process (with binary inputs and outputs) that corresponds to MECC unitaries (e.g., the 4-party example in Table \ref{tab:MECC-PF}) and prove the validity of its corresponding $W = \dketbra{U_\omega}{U_\omega}$ in Appendix B. Since they make explicit use of the MECC property, the rest of the proof essentially follows the steps in Appendix B of \cite{AGB17}, generalizing them to MECC process function with arbitrary inputs and outputs. 

First, we express the local map $M$ in terms of its Kraus operators $M_l$:
\begin{equation}
M = \sum\limits_{\vec{i},\vec{j},l}\ketbra{\vec{i}}{\vec{j}}_{\vec{I}} \otimes M_l\ketbra{\vec{i}}{\vec{j}}_{\vec{I}}M_l^\dag.
\end{equation}
We then rewrite the process vector $\dket{U_\omega}$ as
\begin{equation}
\dket{U_\omega} = \sum\limits_{\vec{o},\vec{s}} \ket{\vec{o}}_{\vec{O}}\otimes\ket{\vec{s}}_{O_{src}}\otimes\Pi_{\omega(\vec{o})}\ket{\vec{s}}_{\vec{I}}\otimes\ket{\vec{o}}_{I_{sink}}, 
\end{equation}
where $\Pi_{\omega(\vec{o})} = \bigotimes\limits_{k=1}^n \Pi_{\omega_k(\vec{o}_{\backslash k})}$, with $\Pi_{\omega_k(\vec{o}_{\backslash k})}$ being a local permutation mapping $\ket{s_k}_{I_k}$ to $\ket{s_k\oplus_{|I_k|} \omega_k(\vec{o}_{\backslash k})}_{I_k}$.

After a few calculations, we obtain the operator
\begin{equation}
    N = \sum\limits_{\vec{o},\vec{\Tilde{o}}, l}\Pi^T_{\omega(\vec{o})}{M_l}^T\ketbra{\vec{o}}{\vec{\Tilde{o}}}(M_l^\dag)^T\Pi_{\omega(\vec{\Tilde{o}})} \otimes \ketbra{\vec{o}}{\vec{\Tilde{o}}}_{I_{sink}},
\end{equation}
which is positive since the link product of two positive operators is also positive \cite{CDP09}. Imposing the trace preserving condition (\ref{eq:valid}) and transposing both sides gives
\begin{equation}
\sum\limits_{\vec{o},l}\Pi^T_{\omega(\vec{o})}{M_l}^\dag\ketbra{\vec{o}}{\vec{o}}M_l\Pi_{\omega(\vec{o})} = \id_{O_{src}}.
\end{equation}
Because of the MECC property, at most one of $\omega$'s component takes a non-zero value while the remaining are all null. We can thus decompose the left-hand-side as
\begin{align}
&\sum\limits_{l, \vec{o}:\omega(\vec{o})=\vec{0}}{M_l}^\dag\ketbra{\vec{o}}{\vec{o}}M_l \label{eq:valid-mecc1} \\  
&+ \sum\limits_{k = 1}^n \sum\limits_{i_k = 1}^{|I_k|} \sum\limits_{l, \vec{o}:\omega_k(\vec{o}_{\backslash k})=i_k}\Pi^T_{\omega(\vec{o})}{M_l}^\dag\ketbra{\vec{o}}{\vec{o}}M_l\Pi_{\omega(\vec{o})}, \label{eq:valid-mecc2}
\end{align}
Let's look at one term in line (\ref{eq:valid-mecc2}) for some $k'$ and $i_{k'} \neq 0$:
\begin{align}
&\sum\limits_{l, \vec{o}:\omega_{k'}(\vec{o}_{\backslash k'})=i_{k'}}\Pi^T_{\omega(\vec{o})}{M_l}^\dag\ketbra{\vec{o}}{\vec{o}}M_l\Pi_{\omega(\vec{o})}  \\
&= \sum_l (\Pi^T_{i_{k'}} \otimes \id_{\backslash k'})M^\dag_l(\sum\limits_{\vec{o}:\omega_{k'}(\vec{o}_{\backslash k'})=i_{k'}} \ketbra{\vec{o}} {\vec{o}})M_l(\Pi_{i_{k'}} \otimes \id_{\backslash k'}) \\
&= \sum_l (\Pi^T_{i_{k'}} \otimes \id_{\backslash {k'}})M^\dag_l(\id_{k'} \otimes \ketbra{\vec{c}_{k',i_{k'}}}{\vec{c}_{k',i_{k'}}})M_l(\Pi_{i_{k'}} \otimes \id_{\backslash k'})\\
&=\sum\limits_{l_k: k\neq k'}\Motimes_{k\neq k'}{M^k_{l_k}}^\dag\bigg(\Pi^T_{i_{k'}}(\sum_{l_{k'}}M^{k' \dag}_{l_{k'}}M^{k'}_{l_{k'}})\Pi_{i_{k'}} \otimes \ketbra{\vec{c}_{k',i_{k'}}}{\vec{c}_{k',i_{k'}}}\bigg)\Motimes_{k\neq k'}M^k_{l_k} \\
&=\sum\limits_{l_k: k\neq k'}\Motimes_{k\neq k'}{M^k_{l_k}}^\dag\bigg(\id_{k'} \otimes \ketbra{\vec{c}_{k',i_{k'}}}{\vec{c}_{k',i_{k'}}}\bigg)\Motimes_{k\neq k'}M^k_{l_k} \\
&= \sum_l M^\dag_l(\sum\limits_{\vec{o}:\omega_{k'}(\vec{o}_{\backslash k'})=i_{k'}} \ketbra{\vec{o}} {\vec{o}})M_l \\
&=\sum\limits_{l, \vec{o}:\omega_{k'}(\vec{o}_{\backslash k'})=i_{k'}}{M_l}^\dag\ketbra{\vec{o}}{\vec{o}}M_l.
\end{align}

This being true for every party $k$ and every input $i_k$, the operator $\Pi_{\omega(\vec{o})}$ in line (\ref{eq:valid-mecc2}) cancels itself out:
\begin{align}
&\sum\limits_{k = 1}^n \sum\limits_{i_k = 1}^{|I_k|} \sum\limits_{l, \vec{o}:\omega_k(\vec{o}_{\backslash k})=i_k}\Pi^T_{\omega(\vec{o})}{M_l}^\dag\ketbra{\vec{o}}{\vec{o}}M_l\Pi_{\omega(\vec{o})} \\
&= \sum\limits_{k = 1}^n \sum\limits_{i_k = 1}^{|I_k|} \sum\limits_{l, \vec{o}:\omega_k(\vec{o}_{\backslash k})=i_k}{M_l}^\dag\ketbra{\vec{o}}{\vec{o}}M_l.\label{eq:valid-mecc3}
\end{align}
Finally, combining (\ref{eq:valid-mecc3}) with line (\ref{eq:valid-mecc1}), we have that
\begin{equation}
\sum\limits_{\vec{o},l}\Pi^T_{\omega(\vec{o})}{M_l}^\dag\ketbra{\vec{o}}{\vec{o}}M_l\Pi_{\omega(\vec{o})} = \sum\limits_{l, \vec{o}}{M_l}^\dag\ketbra{\vec{o}}{\vec{o}}M_l = \id_{O'}.
\end{equation}
\end{enumerate}
\end{proof}
\end{document}